\documentclass[oneside,reqno]{amsart}
\usepackage[utf8]{inputenc}
\usepackage{amsmath, amsthm,amssymb,amsfonts,latexsym}
\usepackage[dvips]{graphicx}
\usepackage{color}
\usepackage{array}
\usepackage[hmargin=2.75cm,vmargin=2.75cm]{geometry}
\usepackage{booktabs, multicol, multirow}
\usepackage{verbatim}
\usepackage{natbib}
\newtheorem{prop}{Proposition}
\newtheorem{cor}{Corollary}
\newtheorem{lemma}{Lemma}
\newtheorem{thm}{Theorem}

\newtheorem{assump}{Assumption}
\newtheorem{defin}{Definition}
\usepackage{tikz}
\usepackage{caption}
\usepackage{subcaption}
\usetikzlibrary{trees,shapes,snakes}
\usepackage{enumitem}
\usepackage{bbm}

\theoremstyle{remark}
\newtheorem{remark}{Remark}

\allowdisplaybreaks

\newcommand{\ES}{\mathrm{ES}}
\newcommand{\R}{\mathbb{R}}

\newcommand{\N}{\mathbb{N}}


\title[The importance of dynamic risk constraints for limited liability operators]{
The importance of dynamic risk constraints for limited liability operators
} 

\author{John Armstrong}
\address{John Armstrong: King's College London}
\email{john.1.armstrong@kcl.ac.uk}

\author{Damiano Brigo}
\address{Damiano Brigo: Imperial College London}
\email{damiano.brigo@imperial.ac.uk}

\author{Alex S.L. Tse}
\address{Alex S.L. Tse: University of Exeter}
\email{s.tse@exeter.ac.uk}

\date{\today}
\begin{document}
\maketitle

\begin{abstract}
Previous literature shows that prevalent risk measures such as Value at Risk or Expected Shortfall are ineffective to curb excessive risk-taking by a tail-risk-seeking trader with S-shaped utility function in the context of portfolio optimisation. However, these conclusions hold only when the constraints are static in the sense that the risk measure is just applied to the terminal portfolio value. In this paper, we consider a portfolio optimisation problem featuring S-shaped utility and a dynamic risk constraint which is imposed throughout the entire trading horizon. Provided that the risk control policy is sufficiently strict relative to the asset performance, the trader's portfolio strategies and the resulting maximal expected utility can be effectively constrained by a dynamic risk measure. Finally, we argue that dynamic risk constraints might still be ineffective if the trader has access to a derivatives market. 
\end{abstract}

\section{Introduction}
\label{sect:intro}

Portfolio optimisations are typically formulated as an expected utility maximisation problem faced by a risk averse agent with concave utility function. However, a simple concave function may not be sufficient to model agents’ preferences in an actual trading environment. For example, the limited-liability feature of a financial institution as well as standard remuneration scheme tend to create incentive distortion where a successful trader can share the profits via bonuses but a failed trader can simply walk away without punishment. Thus gains and losses can be perceived very differently by an agent leading to deviation from a concave utility function. See for example \cite{carpenter00} and \cite{bichuch-sturm14}. At a psychological level, the seminal work of \cite{kahneman-tversky79} and many of the other follow-up studies reveal that individuals are risk averse over positive outcomes but risk seeking over negative outcomes. These stylised preferences can be better captured by an S-shaped utility function which is concave on gains and convex on losses. 

This paper concerns the risk-taking behaviours of “tail-risk-seeking traders” who do not care much about extreme losses and hence their utility function is  S-shaped. It is of great regulatory interests to understand how the trading activities of a tail-risk-seeking trader can be controlled by standard risk measures. A surprising result has been reported in a recent paper of \cite{armstrong-brigo19} that Value at Risk (VaR) and Expected Shortfall (ES) are totally ineffective to curb the risk-taking behaviours of tail-risk-seeking traders. They consider a portfolio optimisation problem under S-shaped utility function and find that the value function of the trader remains the same upon imposing a static VaR/ES constraint on the terminal portfolio value. In other words, neither VaR nor ES can alter the maximal expected utility attained by a tail-risk-seeking trader compared to the benchmark case without any risk constraint. This casts doubt over the usefulness of prevalent risk management protocols to combat excessive risk-taking by traders with more realistic preferences. An earlier restricted version of the same result, focusing only on a Black-Scholes option market, is in \cite{armstrong-brigo18}. A further related result in \cite{armstrong-brigo20} introduces the notion of $\rho$-arbitrage for a coherent risk measure $\rho$. Positive homogeneity of the measure $\rho$ is the key property that is used to reach the result. A risk measure $\rho$ is defined to be  ineffective if a static risk constraint based on that measure cannot lower the expected utility of a limited liability trader. A $\rho$--arbitrage is defined as a portfolio payoff with non-positive price, non-positive risk as measured by $\rho$ but with strictly positive probability of being strictly positive. The ineffectiveness of static risk constraints based on the coherent risk measure $\rho$ is shown to be equivalent to the existence of a $\rho$--arbitrage. Again, the emphasis for us, in this paper, is that also in \cite{armstrong-brigo18} and \cite{armstrong-brigo20} the risk constraints are static and that the situation becomes very different with dynamic risk constraints.

Indeed, in view of the above negative result, we explore a simple remedy which resurrects VaR/ES as a tool to risk manage tail-risk-seeking traders: the risk measure is imposed dynamically throughout the entire trading horizon. At each point of time given the current assets holding in place, the portfolio risk exposure is computed by projecting the distribution of the portfolio return over an evaluation window under the assumption that the assets holding remains unchanged. There are several advantages with such a dynamic risk constraint. First, this risk management approach is more consistent with the industrial practice where the risk exposure of the trader’s positions is typically reported and monitored at least daily. Second, imposing a static risk constraint on the terminal portfolio value only usually leads to a time-inconsistent optimisation problem where the optimal strategy solved at a future time point may not be consistent with the one derived in the past. This results in difficulty with interpreting the notion of optimality, and one has to make further assumptions (such as whether the agent can pre-commit to the optimal strategy derived at time-zero) to pin down a unique prediction of the trader's action. The idea of time-inconsistency in dynamic optimisation problems can be dated back to \cite{strotz55}. 

Our main contribution is to show that a dynamic VaR or ES constraint can indeed constrain a tail-risk-seeking trader, in the sense that the maximal expected utility attained can be reduced provided that the risk control policy is sufficiently strict relative to the Sharpe ratio of the risky asset. The difference between a static and a dynamic risk constraint is drastic both mathematically and economically. In a complete market, any arbitrary payoff can be synthesised by dynamic replication. As a result, the problem of solving for the optimal trading strategy is equivalent to finding a utility-maximising payoff whose no-arbitrage price is equal to the initial wealth available. This duality principle which converts a dynamic stochastic control problem into a static optimisation has been widely adopted to solve portfolio optimisation problems. A static risk measure applied to the terminal portfolio value only restricts the class of the admissible payoffs. \cite{armstrong-brigo19} show that one can construct a sequence of digital options which pay a small positive amount most of the time but incur an extreme loss with a tiny probability, and that these payoffs can be carefully engineered to satisfy any given VaR/ES limit. The resulting expected utilities will converge to the same utility level associated with an unconstrained problem. 

The conclusion changes significantly when the risk constraint is applied dynamically instead. To comply with the given risk limit at each time point, the notional invested in the underlying assets has to be capped if the risk policy is sufficiently strict. Thus a dynamic risk constraint now has first-order impact on the admissible trading strategies. The usual duality approach no longer works because the restriction on the trading strategies from the outset precludes dynamic replication of a claim. We therefore have to resort to the primal HJB equation approach to solve the portfolio optimisation problem. Although a close-form solution is not available in general, we can nonetheless deduce the analytical conditions on the model parameters under which a dynamic VaR/ES constraint becomes effective. In a special case where the excess return of the asset is zero, we can provide a finer characterisation of the optimal trading strategy. 

Our results show that a dynamic risk constraint can be effective against a ``delta-one trader'' who can only invest in the underlying risky assets. What will happen if the trader can access derivatives trading as well? In the context of utility maximisation under market completeness, there is no economic difference between delta-one and derivatives trading since any payoff can be replicated by dynamic trading in the underlying assets. We argue, however, that a dynamic risk constraint such as ES will become ineffective again if derivatives trading is allowed. The key idea is that a derivatives trader can exploit dynamic rebalancing to continuously roll-over some risky digital options to ensure the risk constraint is satisfied at all time while generating an arbitrarily high level of utility.

We conclude the introduction by discussing some related work. A vast literature on continuous-time portfolio optimisation has emerged since Merton (\citeyear{Merton:69}, \citeyear{Merton:71}). One natural extension of the original Merton model is to incorporate additional constraints in form of a risk functional applied to the terminal portfolio value. Examples of the extra constraints include VaR (\cite{basak-shapiro01}), expected loss and other similar shortfall-style measures (\cite{gabih-grecksch-wunderlich05}), probability of outperforming a given benchmark (\cite{boyle-tian07}) and utility-based shortfall risk (\cite{gundel-weber08}). In these papers, the combination of concave utility function and static risk constraint facilitates the use of the dual approach to solve the underlying optimisation problems. 

There has been a recent strand of literature focusing on dynamic risk constraints. \cite{yiu04}, \cite{cuoco-he-isaenko08}, \cite{akume-luderec-wunderlich10} consider similar portfolio optimisation problems with VaR/ES constraints under different modelling setups. The optimal trading strategy behaves very differently when a static constraint is replaced by a dynamic one. For example, \cite{basak-shapiro01} show that a static VaR constraint may induce the trader to take more risk (relative to the unconstrained case) in the bad state of the world, whereas \cite{cuoco-he-isaenko08} show that if the VaR constraint is applied dynamically then the optimal risk exposure can be unanimously reduced. HJB equation formulation has to be used when solving the problem with dynamic risk constraints. All the papers cited above work with a concave utility function, and thus the problem is still relatively standard to yield analytical and numerical progress. 

S-shaped utility maximisation has received a lot of attention in the context of behavioural economics and convex incentive scheme. Despite the non-standard shape of the underlying utility function, duality method can still be suitably adapted to solve the optimisation problems. See for example \cite{berkelaar-kouwenberg-post04}, \cite{reichlin2013}, \cite{bichuch-sturm14} and the references therein. Papers on dynamic portfolio optimisation which simultaneously feature S-shaped utility as well as VaR/ES constraint include \cite{armstrong-brigo19}, \cite{guan-liang16} and \cite{dong-zheng20}. But again, the constraints are static in nature which are only imposed at the terminal time point. Our work fills the gap in the literature by considering S-shaped utility function and dynamic risk constraint in conjunction. In the same spirit that \cite{cuoco-he-isaenko08} is the dynamic version of \cite{basak-shapiro01} under concave utility function, our work can be viewed as the dynamic version of \cite{armstrong-brigo19} under S-shaped utility to give insights on the new economic phenomena when a more realistic risk management approach is adopted. 

The rest of the paper is organised as follows. Section \ref{sect:setup} gives an overview of the modelling framework. The main results of the paper are stated in Section \ref{sect:mainresults} with some numerical illustrations. A special case that the excess return of the asset being zero is analysed in details in Section \ref{sect:zerodrift}. We briefly discuss in Section \ref{sect:derivatives} how the results will change if the trader can access a derivatives market. Section \ref{sect:conclude} concludes. Miscellaneous technical materials are deferred to the appendix.

\section{Modelling setup}
\label{sect:setup}
\subsection{The economy}

For simplicity of exposition, in the main body of this paper we consider a standard Black-Scholes economy with a riskfree bond and one risky asset only. Extension to the multi-asset setup is discussed in the Appendix \ref{app:multi}.

Fix a terminal horizon $T>0$. Let $(\Omega,\mathcal{F},\{\mathcal{F}_t\}_{0\leq t\leq T},\mathbb{P})$ be a filtered probability space satisfying the usual conditions which supports a one-dimensional Brownian motion $B=(B_t)_{t\geq 0}$. The risky asset has price process $S=(S_t)_{t\geq 0}$ following a geometric Brownian motion
\begin{align*}
\frac{dS_t}{S_t}=\mu dt + \sigma dB_t
\end{align*}
with drift $\mu$ and volatility $\sigma>0$, and the riskfree bond has a constant interest rate of $r$. A trader invests in the two assets dynamically where an amount of $\Pi_t$ is invested in the risky asset at time $t$. The portfolio strategy $\Pi=(\Pi_t)_{t\geq 0}$ is said to be admissible if it is adapted and $\int_0^T \Pi_t^2 dt<\infty$ almost surely. The set of admissible portfolio strategies is denoted by $\mathcal{A}_0$. The portfolio value process $X=(X_t)_{t\geq 0}$ then evolves as
\begin{align}
dX_t = \frac{\Pi_t}{S_t} dS_t + r(X_t - \Pi_t)dt =rX_t dt+\Pi_t [(\mu-r) dt + \sigma dB_t],\qquad X_0=x_0,
\label{eq:wealth}
\end{align}
where $x_0$ is an exogenously given initial capital of the trader.

\subsection{Dynamic risk constraints}

Suppose $r\neq 0$ for the moment. The dynamics \eqref{eq:wealth} can be rewritten as
\begin{align*}
dX_t=\kappa(\theta_t-X_t) dt + \sigma \Pi_t dB_t
\end{align*}
with $\kappa:=-r$ and $\theta_t:=-\frac{(\mu-r)\Pi_t}{r}$. This is an Ornstein–Uhlenbeck process and thus
\begin{align*}
X_{t+\Delta}&=e^{-\kappa \Delta}X_t + \kappa\int_t ^ {t+\Delta} e^{-\kappa(t+\Delta - s)}\theta_s ds + \sigma\int_t ^ {t+\Delta} e^{-\kappa(t+\Delta - s)}\Pi_s dB_s
\end{align*}
for any $t$ and $\Delta>0$. We then deduce
\begin{align}
X_{t+\Delta}-e^{r\Delta}X_t= (\mu-r)\int_t ^ {t+\Delta} e^{r(t+\Delta - s)}\Pi_s ds +\sigma \int_t ^ {t+\Delta} e^{r(t+\Delta - s)}\Pi_s dB_s
\label{eq:portreturn}
\end{align}
which could be interpreted as the (numeraire-adjusted) portfolio gain/loss over the time horizon $[t,t+\Delta]$. 

At each instant of time $t$, a risk manager assesses the risk associated with the portfolio return given by \eqref{eq:portreturn}. Since the risk manager typically does not have the knowledge of the trader's portfolio strategy beyond the current time $t$, he assumes the portfolio strategy $\Pi$ will be held fixed over the risk evaluation window $[t,t+\Delta]$. Then the time-$t$ estimated random variable of portfolio loss over $[t,t+\Delta]$, denoted by $L_t$, is given by
\begin{align*}
-L_t&:= (\mu-r)\Pi_t\int_t ^ {t+\Delta} e^{r(t+\Delta - s)} ds +  \sigma\Pi_t \int_t ^ {t+\Delta} e^{r(t+\Delta - s)} dB_s \\
&=\frac{(\mu-r)(e^{r\Delta}-1)}{r}\Pi_t+\sigma \Pi_t \int_t ^ {t+\Delta} e^{r(t+\Delta - s)} dB_s
\end{align*}
and as such $L_t$ is normally distributed with mean and variance of
\begin{align}
\mathbb{E}[L_t]=-\frac{(\mu-r)(e^{r\Delta}-1)}{r}\Pi_t,\quad \text{Var}(L_t)=\frac{(e^{2r\Delta}-1)\sigma^2}{2r}\Pi_t^2.
\label{eq:lossdist}
\end{align}
The special case of $r=0$ can be recovered by considering the appropriate limits in \eqref{eq:lossdist}. 

\begin{remark}
In the literature, there are multiple ways to estimate the projected distribution of portfolio gain/loss. Our approach is based on \cite{yiu04} where the notional invested in the risky asset $\Pi_t$ is assumed to be fixed by the risk manager. Alternatively, the risk manager can also assume the proportion of capital invested in the risky asset $\Pi_t/X_t$ is fixed - this assumption is adopted for example by \cite{cuoco-he-isaenko08}. The latter approach leads to a more difficult mathematical problem in general because the projected distribution will then also depend on the current portfolio value $X_t$. The question about which approach is more superior depends on the risk management practice adopted at a particular institution. Another very plausible approach is to assume the quantity of the assets $n_t:=\Pi_t/S_t$ to be fixed (this could be more relevant in the context of equity trading where stock and future positions are typically recorded in terms of quantity rather than notional). Then starting from \eqref{eq:portreturn} we can deduce that
\begin{align*}
-L_t&:= (\mu-r)n_t\int_t ^ {t+\Delta} e^{r(t+\Delta - s)} S_u du +  \sigma n_t \int_t ^ {t+\Delta} e^{r(t+\Delta - u)} S_u dB_u \\
&= n_t\int_t^{t+\Delta}e^{r(t+\Delta-u)} \left((\mu-r)S_u du+\sigma S_u dB_u\right)\\
&=n_t e^{r(t+\Delta)} \int_t^{t+\Delta} e^{-ru}(dS_u-rS_u du)\\
&=n_t e^{r(t+\Delta)} \int_t^{t+\Delta} d(e^{-ru}S_u)\\
&=n_t(S_{t+\Delta}-e^{r\Delta}S_t)\\
&=\Pi_t\left[e^{\left(\mu-\frac{\sigma^2}{2}\right)(t+\Delta)+\sigma(B_{t+\Delta}-B_t)}-e^{r\Delta}\right]
\end{align*}
which only depends on the current state via $\Pi_t=n_t S_t$. This is qualitatively very similar to the approach used by \cite{yiu04} and us, except that $L_t$ is now linked to some log-normal random variable.
\end{remark}

A dynamic risk constraint is imposed such that $\rho(L_t)\leq R$ for all $t\in[0,T)$. Here $\rho(\cdot)$ is some risk measure and $R>0$ is an exogenously given level of risk limit. For example, if the risk measure is taken as VaR with confidence level $\alpha$ (with $\alpha<0.5$) such that $\rho(L_t)=\text{VaR}_{\alpha}(L_t):=\sup\{x\in\mathbb{R}:\mathbb{P}(L_t\geq x)> \alpha\}$, then using the Gaussian property of $L_t$ and \eqref{eq:lossdist} the constraint can be specialised to
\begin{align*}
-\frac{(\mu-r)(e^{r\Delta}-1)}{r}\Pi_t - \sigma \sqrt{\frac{e^{2r\Delta}-1}{2r}}\Phi^{-1}(\alpha) |\Pi_t|\leq R
\end{align*}
where $\Phi$ denotes the cumulative distribution function (cdf) of a $N(0,1)$ random variable. We define the set
\begin{align}
K_{\text{VaR}}:=\left\{\pi\in\mathbb{R}:-\frac{(\mu-r)(e^{r\Delta}-1)}{r}\pi - \sigma \sqrt{\frac{e^{2r\Delta}-1}{2r}}\Phi^{-1}(\alpha) |\pi|\leq R\right\}
\label{eq:setKVaR}
\end{align}
such that compliance with the dynamic VaR constraint at time $t$ is equivalent to $\Pi_t\in K_{\text{VaR}}$.

Similarly, if the risk measure is taken as ES with confidence level $\alpha$ such that $\rho(L_t)=\mathbb{E}[L_t| L_t\geq \text{VaR}_{\alpha}(L_t)]$, then the constraint becomes
\begin{align*}
-\frac{(\mu-r)(e^{r\Delta}-1)}{r}\Pi_t+\sigma \sqrt{\frac{e^{2r\Delta}-1}{2r}}\frac{\phi(\Phi^{-1}(\alpha))}{\alpha}|\Pi_t|\leq R
\end{align*}
with $\phi(\cdot)$ being the probability density function (pdf) of a $N(0,1)$ random variable. We then define the set
\begin{align}
K_{\text{ES}}:=\left\{\pi\in\mathbb{R}:-\frac{(\mu-r)(e^{r\Delta}-1)}{r}\pi+\sigma \sqrt{\frac{e^{2r\Delta}-1}{2r}}\frac{\phi(\Phi^{-1}(\alpha))}{\alpha}|\pi|\leq R\right\}
\label{eq:setKES}
\end{align}
where we require $\Pi_t\in K_{\text{ES}}$ for all $t$ in order to satisfy the dynamic ES constraint.

It turns out that the nature of the sets $K_{\text{VaR}}$ and $K_{\text{ES}}$ crucially depends on the Sharpe ratio of the risky asset $\frac{\mu-r}{\sigma}$, as the following lemma shows.

\begin{lemma}
Define the constants
\begin{align}
M_{\text{VaR}}:=-\sqrt{\frac{e^{2r\Delta}-1}{2r}}\frac{r}{e^{r\Delta}-1}\Phi^{-1}(\alpha)>0,\qquad
M_{\text{ES}}:=\sqrt{\frac{e^{2r\Delta}-1}{2r}}\frac{r}{e^{r\Delta}-1}\frac{\phi(\Phi^{-1}(\alpha))}{\alpha}>0.
\label{eq:Mconst}
\end{align}
Then for $i\in\{\text{VaR},\text{ES}\}$, the sets $K_i$ defined in \eqref{eq:setKVaR} and \eqref{eq:setKES} have the following properties:
\begin{enumerate}
	\item If $\frac{\mu-r}{\sigma}\geq M_{i}$, there exists $-\infty<k^i_1<0$ such that $K_{i}=[k^i_1,\infty)$;
	\item If $|\frac{\mu-r}{\sigma}|<M_{i}$, there exists $-\infty<k^i_1<0<k^i_2<\infty$ such that $K_{i}=[k^i_1,k^i_2]$;
	\item If $\frac{\mu-r}{\sigma}\leq -M_{i}$, there exists $0<k^i_2<\infty$ such that $K_{i}=(-\infty,k^i_2]$.
\end{enumerate}
Moreover,
\begin{align*}
k_1^\text{VaR}=-\frac{R}{ - \sigma \sqrt{\frac{e^{2r\Delta}-1}{2r}}\Phi^{-1}(\alpha)+\frac{(\mu-r)(e^{r\Delta}-1)}{r}},\qquad k_2^\text{VaR}=\frac{R}{ - \sigma \sqrt{\frac{e^{2r\Delta}-1}{2r}}\Phi^{-1}(\alpha)-\frac{(\mu-r)(e^{r\Delta}-1)}{r}};\\
k_1^\text{ES}=-\frac{R}{\sigma \sqrt{\frac{e^{2r\Delta}-1}{2r}}\frac{\phi(\Phi^{-1}(\alpha))}{\alpha}+\frac{(\mu-r)(e^{r\Delta}-1)}{r}},\qquad k_2^\text{ES}=\frac{R}{\sigma \sqrt{\frac{e^{2r\Delta}-1}{2r}}\frac{\phi(\Phi^{-1}(\alpha))}{\alpha}-\frac{(\mu-r)(e^{r\Delta}-1)}{r}}.
\end{align*}

\label{lem:setA}
\end{lemma}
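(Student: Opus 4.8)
The plan is to treat both cases $i\in\{\text{VaR},\text{ES}\}$ simultaneously by writing the defining inequality of $K_i$ in the common form $g_i(\pi):=-a\pi+b_i|\pi|\le R$, where
\[
a:=\frac{(\mu-r)(e^{r\Delta}-1)}{r},\qquad b_{\text{VaR}}:=-\sigma\sqrt{\tfrac{e^{2r\Delta}-1}{2r}}\,\Phi^{-1}(\alpha),\qquad b_{\text{ES}}:=\sigma\sqrt{\tfrac{e^{2r\Delta}-1}{2r}}\,\frac{\phi(\Phi^{-1}(\alpha))}{\alpha}.
\]
First I would record the sign facts on which everything rests: $\frac{e^{r\Delta}-1}{r}>0$ and $\frac{e^{2r\Delta}-1}{2r}>0$ for every $r$ (including the limit $r=0$, where they equal $\Delta$ and $\Delta$ respectively), that $\sigma>0$, and that since $\alpha<1/2$ we have $\Phi^{-1}(\alpha)<0$ and $\phi(\Phi^{-1}(\alpha))/\alpha>0$. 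These give $b_{\text{VaR}}>0$ and $b_{\text{ES}}>0$, so in both cases $b_i>0$ while $a$ carries the sign of $\mu-r$.

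Next I would translate the Sharpe-ratio trichotomy into a comparison between $a$ and $b_i$. Factoring the common positive quantity $\frac{e^{r\Delta}-1}{r}$ out of the definitions of $a$ and $M_i$ gives $a=(\mu-r)\frac{e^{r\Delta}-1}{r}$ and $b_i=\sigma M_i\frac{e^{r\Delta}-1}{r}$. Dividing by this positive factor and then by $\sigma>0$ yields the equivalences
\[
\tfrac{\mu-r}{\sigma}\ge M_i\iff a\ge b_i,\qquad \bigl|\tfrac{\mu-r}{\sigma}\bigr|<M_i\iff |a|<b_i,\qquad \tfrac{\mu-r}{\sigma}\le -M_i\iff a\le -b_i,
\]
so the three regimes in the statement correspond exactly to $a\ge b_i$, $-b_i<a<b_i$, and $a\le -b_i$.

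The core of the argument is then an elementary study of the piecewise-linear map $g_i$. Since $b_i>0$, $g_i$ is convex, so $K_i=\{g_i\le R\}$ is a (possibly unbounded) interval, and because $g_i(0)=0<R$ this interval always contains a neighbourhood of $0$. On the half-line $\pi\ge 0$ one has $g_i(\pi)=(b_i-a)\pi$ and on $\pi\le 0$ one has $g_i(\pi)=-(a+b_i)\pi$. Reading off the slopes: the set is bounded above iff $b_i-a>0$, in which case $g_i(k_2^i)=R$ gives $k_2^i=R/(b_i-a)>0$; it is bounded below iff $a+b_i>0$, in which case $k_1^i=-R/(a+b_i)<0$. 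Combining with the trichotomy above --- noting that $a\ge b_i$ forces $a+b_i\ge 2b_i>0$ while $b_i-a\le 0$, that $a\le -b_i$ forces the mirror situation, and that $|a|<b_i$ makes both $a+b_i$ and $b_i-a$ strictly positive --- yields precisely the three cases $[k_1^i,\infty)$, $[k_1^i,k_2^i]$, and $(-\infty,k_2^i]$. Substituting the explicit $b_i$ back into $k_1^i=-R/(a+b_i)$ and $k_2^i=R/(b_i-a)$ then reproduces the closed-form expressions in the statement.

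I do not expect a conceptual obstacle here; the only real care needed is sign bookkeeping. The most error-prone point is checking positivity of the radicands and of $\frac{e^{r\Delta}-1}{r}$ uniformly in the sign of $r$, and verifying that the $r=0$ case is recovered continuously, since the $M_i$ and the $k_j^i$ carry $r$ in the denominator. I would handle this by treating $\frac{e^{r\Delta}-1}{r}$ and $\frac{e^{2r\Delta}-1}{2r}$ as the manifestly positive integrals $\int_0^\Delta e^{rs}\,ds$ and $\int_0^\Delta e^{2rs}\,ds$, which makes both the positivity and the $r\to 0$ limit transparent.
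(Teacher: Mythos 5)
Your proof is correct and is exactly the "simple exercise of analysing the piecewise linear function" that the paper's one-line proof alludes to; you have merely written out the details the authors omit. The sign bookkeeping (positivity of $\int_0^\Delta e^{rs}\,ds$ and $\int_0^\Delta e^{2rs}\,ds$, the translation of the Sharpe-ratio trichotomy into $a\ge b_i$, $|a|<b_i$, $a\le -b_i$, and the slope analysis on each half-line) all checks out against the stated formulas for $k_1^i$ and $k_2^i$.
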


\begin{proof}
	This is a simple exercise of analysing the piecewise linear function arising in the definition of $K_{\text{VaR}}$ and $K_{\text{ES}}$.
\end{proof}

The constants $M_{i}$ defined in \eqref{eq:Mconst} encapsulate the risk management parameters $\alpha$ and $\Delta$. Unless the quality of the investment asset is very good (measured by the magnitude of its Sharpe ratio) relative to $M_i$, a dynamic VaR or ES constraint will result in a restriction that $\Pi_t$ needs to take value in a bounded set, i.e. a delta limit restriction where both the long and short position in the underlying asset cannot exceed certain notional levels given by $k_1^{i}$ and $k_2^i$. It is also not hard to see that $M_i$ is decreasing in both $\alpha$ and $\Delta$. Hence a small confidence level of the VaR/ES constraint or a tight risk evaluation window will more likely lead to a bounded investment set $K_i$. Provided that $k^i_1$ and $k^i_2$ exist, one can also easily check that $|k^i_1|$ and $k^i_2$ are both decreasing in $\sigma$ and increasing in $R$ and $\alpha$. Hence a high asset volatility, low risk limit or tight confidence level of the VaR/ES measure will result in small absolute delta notional limit.

\subsection{Trader's utility function and optimisation problem}

We assume that the trading decision is made by a ``tail-risk-seeking trader'' who is insensitive towards extreme losses. His utility function $U(\cdot)$ is S-shaped and his goal is to maximise the expected utility of the terminal portfolio value. The only assumption required over $U$ is the following.
\begin{assump}
	The utility function $U:\mathbb{R}\to\mathbb{R}$ is a continuous, increasing and concave (resp. convex) function on $x>0$ (resp. $x<0$) with $U(0)=0$ and $\lim_{x\to -\infty}\frac{U(x)}{x}=0$.
\label{assump:standing}
\end{assump}
In particular, the trader is locally risk averse over the domain of gains but locally risk seeking over the domain of losses. Moreover, the assumption on the left-tail behaviour of the utility function further suggests that the trader is tail-risk-seeking in that the ``dis-utility'' due to extreme losses has a sub-linear growth. We do not require $U(x)$ to be differentiable. This allows us to consider for example the piecewise power utility function of \cite{kahneman-tversky79} which is not differentiable at $x=0$, or an option payoff function which may contain kinks.

Mathematically, the underlying optimisation problem is
\begin{align}
V(t,x):=\sup_{\Pi\in\mathcal{A}(K)}\mathbb{E}^{(t,x)}[U(X^{\Pi}_T)]
\label{eq:valFun}
\end{align}
where $X=X^{\Pi}$ has dynamics described by \eqref{eq:wealth}, and $\mathcal{A}(K)$ is the admissible set of the portfolio strategies under a given dynamic risk constraint in form of
\begin{align*}
\mathcal{A}(K):=\{\Pi\in \mathcal{A}_0:  \Pi(t,\omega)\in K\quad \mathcal{L}\otimes \mathbb{P}\text{-a.e. } (t,\omega)\}
\end{align*}
with $K\subseteq \mathbb{R}$ being some given set and $\mathcal{L}$ is Lebesgue measure. For example, if the risk constraint is absent we simply take $K=K_0:=\mathbb{R}$ and then $\mathcal{A}(K_0)=\mathcal{A}_0$. If a dynamic VaR constraint is in place, we set $K=K_{\text{VaR}}$ as defined in \eqref{eq:setKVaR}. Likewise a choice of $K=K_{\text{ES}}$ given by \eqref{eq:setKES} corresponds to a dynamic ES constraint.

\begin{remark}
Portfolio optimisation problem in form of \eqref{eq:valFun} with $U$ being a strictly concave, twice-differentiable function is studied by \cite{cvitanic-karatzas92}. Their results cannot be applied to our setup because our utility function is S-shaped. \cite{dong-zheng19} consider a version of the problem with S-shaped utility and short-selling restrictions. Their solution method is based a concavification argument in conjunction with the results by \cite{bian-miao-zheng11} which cover non-smooth utility function but only under the assumption that the set $K$ is in form of a convex cone. For our model, Lemma \ref{lem:setA} suggests that the set $K$ under VaR/ES constraint cannot be a convex cone. Thus we cannot apply their approaches to solve our problem.
\end{remark}

Let $V_i(t,x)$ be the value function of problem \eqref{eq:valFun} under $K=K_{i}$ with $i\in\{0,\text{VaR},\text{ES}\}$ denoting the label identifying which dynamic risk measure is being adopted (i.e no risk constraint at all, Value at Risk and Expected Shortfall). We first state a benchmark result based on \cite{armstrong-brigo19}.

\begin{prop}[Theorem 4.1 of \cite{armstrong-brigo19}]
The value function of the unconstrained portfolio optimisation problem is $V_{0}(t,x)=\sup_{s}U(s)$.
\label{prop:uncon}
\end{prop}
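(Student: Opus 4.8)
Write $\bar U:=\sup_s U(s)\in(0,\infty]$; this supremum is positive (since $U$ is increasing with $U(0)=0$) and, by monotonicity, equals $\lim_{s\to\infty}U(s)$. The bound $V_0(t,x)\le\bar U$ is immediate, because $U(X^\Pi_T)\le\bar U$ pointwise for every admissible $\Pi$, so $\mathbb{E}^{(t,x)}[U(X^\Pi_T)]\le\bar U$. The content of the proposition is thus the reverse inequality $V_0(t,x)\ge\bar U$: I must construct, for each target level strictly below $\bar U$ (or arbitrarily large when $\bar U=\infty$), an admissible strategy whose terminal expected utility exceeds that level. The economic reason this should hold is that the concave envelope of such an $S$-shaped $U$ is the constant function $\bar U$ — the sublinear left tail lets a chord from $(a,U(a))$ out towards $-\infty$ sit arbitrarily close to the flat line at height $U(a)$ — so in a complete market the best attainable expected utility collapses to $\bar U$.

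The plan is to exploit completeness of the Black--Scholes market to convert the dynamic control problem into a static budget problem, as sketched in the introduction. For $t<T$ the discounted state-price density over $[t,T]$ is
\begin{align*}
H_{t,T}=\exp\Bigl(-\bigl(r+\tfrac12\lambda^2\bigr)(T-t)-\lambda(B_T-B_t)\Bigr),\qquad \lambda:=\frac{\mu-r}{\sigma},
\end{align*}
and an $\mathcal F_T$-measurable payoff $X_T$ is financed from wealth $x$ at time $t$ exactly when $\mathbb{E}^{(t,x)}[H_{t,T}X_T]=x$; by the martingale representation theorem any such (sufficiently integrable) payoff is generated by some strategy, which for bounded payoffs I expect to lie in $\mathcal A_0$. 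Note $\mathbb{E}^{(t,x)}[H_{t,T}]=e^{-r(T-t)}$.

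The near-optimal payoffs I would use are digital. Fix $a>0$ and a small $\delta\in(0,1)$, choose an event $A$ with $\mathbb{P}(A^c)=\delta$, and set $X_T=a$ on $A$ and $X_T=-b$ on $A^c$, where $b$ is pinned down by the budget constraint,
\begin{align*}
b=\frac{a\,\mathbb{E}^{(t,x)}[H_{t,T}\mathbbm{1}_A]-x}{\mathbb{E}^{(t,x)}[H_{t,T}\mathbbm{1}_{A^c}]}.
\end{align*}
The delicate choice is to take $A^c$ to be a \emph{typical} event on which $H_{t,T}$ is bounded above and below — for instance a set $\{c_1\le B_T-B_t\le c_2\}$ of probability $\delta$ — so that $\mathbb{E}^{(t,x)}[H_{t,T}\mathbbm{1}_{A^c}]$ is of order $\delta$ and hence, for $a$ fixed and large enough that the numerator is positive, $b$ grows only like $1/\delta$ as $\delta\downarrow0$. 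The expected utility is then $\mathbb{E}^{(t,x)}[U(X_T)]=U(a)(1-\delta)+U(-b)\delta$.

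Finally I would let $\delta\downarrow0$ with $a$ held fixed. The gain term tends to $U(a)$, while the loss term is killed by the standing assumption $\lim_{x\to-\infty}U(x)/x=0$: writing $U(-b)\delta=\frac{U(-b)}{-b}\cdot(-b\delta)$, the factor $-b\delta$ stays bounded (since $b\sim C/\delta$) and $\frac{U(-b)}{-b}\to0$, so $U(-b)\delta\to0$. Hence $\liminf_{\delta\downarrow0}\mathbb{E}^{(t,x)}[U(X_T)]\ge U(a)$, and taking $U(a)$ arbitrarily close to $\bar U$ gives $V_0(t,x)\ge\bar U$. The hard part will be the admissibility check rather than this limit: one must confirm that the replicating strategy for each bounded digital payoff satisfies $\int_0^T\Pi_t^2\,dt<\infty$ almost surely, despite the blow-up of the replicating delta as $t\to T$; the boundedness of the payoff for each fixed $\delta$ makes this a standard but non-vacuous verification, and it is also where the hypothesis $t<T$ (so that the residual randomness over $[t,T]$ is available to place the event $A$) is genuinely used.
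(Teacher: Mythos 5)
Your proposal is correct and follows essentially the same route as the paper: reduce to a static budget-constrained choice of terminal payoff via market completeness, take a digital payoff delivering a large loss on a small-probability event, and use $\lim_{x\to-\infty}U(x)/x=0$ to make the loss term in the expected utility vanish. The only (immaterial here) difference is that you place the loss on a slab where the pricing kernel is bounded while the paper places it on the deep tail $\{\xi_T>k\}$; both choices keep the product of the loss size and its probability bounded, which is all the limit argument needs, and both you and the paper's own sketch defer the routine admissibility check for the replicating strategy.
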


\begin{proof}[Sketch of proof]
Without loss of generality we just need to prove the result at $t=0$. By standard duality argument (see for example \cite{karatzas-lehoczky-shreve87}), the portfolio optimisation problem \eqref{eq:valFun} without any additional risk constraint is equivalent to solving
\begin{align*}
\begin{cases}
V_0(0,x_0)=\sup\limits_{X_T\in\mathcal{F}_T}\mathbb{E}[U(X_T)]\\
\mathbb{E}[\xi_T X_T]\leq x_0
\end{cases}
\end{align*}
where 
\begin{align*}
\xi_T:=\exp\left[\left(-\frac{\mu-r}{\sigma}\right)B_T-\left(r+\frac{1}{2}\left(\frac{\mu-r}{\sigma}\right)^2\right) T\right]
\end{align*}
is the pricing kernel in the Black-Scholes economy. Now consider a digital payoff in form of
\begin{align}
X_T=-\frac{b}{\mathbb{P}(\xi_T>k)} {\mathbbm 1}_{(\xi_T>k)}+b {\mathbbm 1}_{(\xi_T\leq k)} 
\label{eq:digital}
\end{align}
for $b>0$ and $k>0$. The budget constraint can be written as
\begin{align*}
&-\frac{b}{\mathbb{P}(\xi_T>k)}\mathbb{E}[\xi_T {\mathbbm 1}_{(\xi_T>k)}]+b\mathbb{E}[\xi_T {\mathbbm 1}_{(\xi_T\leq k)}]\leq x_0 \iff -b\leq \frac{\mathbb{P}(\xi_T>k)}{\mathbb{E}[\xi_T {\mathbbm 1}_{(\xi_T>k)}]}\left[x_0-b\mathbb{E}[\xi_T {\mathbbm 1}_{(\xi_T\leq k)}]\right].
\end{align*}
If $\xi_T$ is unbounded from the above (which is the case in the Black-Scholes model), then $\lim_{k\to\infty}\frac{\mathbb{P}(\xi_T>k)}{\mathbb{E}[\xi_T {\mathbbm 1}_{(\xi_T>k)}]}=0$. In turn for any $b>0$ fixed one can always find a sufficiently large $k$ such that the budget constraint is satisfied. The value function must be no less than the expected utility attained by this payoff structure, i.e.
\begin{align*}
V_0(0,x_0)\geq U\left(-\frac{b}{\mathbb{P}(\xi_T>k)}\right) \mathbb{P}(\xi_T>k)+U(b)\mathbb{P}(\xi_T<k).
\end{align*}
Under Assumption \ref{assump:standing}, $\lim_{x\to -\infty}\frac{U(x)}{x}=0$. Thus on sending $k\to\infty$ we deduce $V_0(0,x_0)\geq U(b)$. The result follows since $b>0$ is arbitrary.
\end{proof}

Without any risk constraint in place, the tail-risk-seeking trader can attain any arbitrarily high utility by replicating a sequence of digital options which pay a positive amount with a large probability but incur an extremely disastrous loss with very small probability. \cite{armstrong-brigo19} show that this result does not change even if a static VaR/ES constraint is imposed on the terminal portfolio value, in the sense that the trader can still manipulate the digital structure to attain an arbitrarily high utility level while satisfying the additional constraints.

We are interested in studying whether such conclusion will change if we adopt a dynamic risk constraint instead. With the unconstrained optimisation problem as our benchmark, we first give below a formal definition of the effectiveness of a dynamic risk constraint.

\begin{defin}
A dynamic risk constraint $i\in\{\text{VaR},\text{ES}\}$ is said to be effective if for each $t<T$ there exists $x$ such that $V_{i}(t,x)<V_{0}(t,x)=\sup_s U(s)$.
\label{def:effective}
\end{defin}

The notion of effectiveness in Definition \ref{def:effective} may appear to be somewhat weak as we do not insist that the trader's expected utility have to be strictly reduced at all states $(t,x)$. Indeed for a general utility function, we cannot expect $V_i(t,x)<\sup_s U(s)$ for all $(t,x)$. For example, consider a call spread payoff $U(x)=(x+1)^{+}-(x-1)^{+}-1$ which is a S-shaped, then for as long as $\Pi_t=0$ for all $t$ is an admissible strategy under a given dynamic risk constraint $i$ and interest rate is non-negative, we always have $V_i(t,x)=1=\sup_s U(s)$ for all $t<T$ and $x\geq 1$.  

\section{Main results}
\label{sect:mainresults}

We first give a useful proposition which is the building block of the main results in this paper.

\begin{prop}
For the optimisation problem \eqref{eq:valFun}, if the set $K$ is bounded then for every $t<T$ there exists $x$ such that $V(t,x)<\sup_{s}U(s)$.
\label{prop:delta_suff}
\end{prop}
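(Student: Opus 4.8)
The plan is to exploit the fact that a bounded constraint set $K$ forces every admissible strategy to satisfy $|\Pi_t|\le C$ for some constant $C$, which in turn tames the terminal wealth so severely that the trader can no longer approach $\sup_s U(s)$. Fix $t<T$. Starting from the explicit solution of \eqref{eq:wealth},
\begin{equation*}
X_T = e^{r(T-t)}x + (\mu-r)\int_t^T e^{r(T-s)}\Pi_s\,ds + \sigma\int_t^T e^{r(T-s)}\Pi_s\,dB_s ,
\end{equation*}
I would bound the drift integral in absolute value by a constant $C_1$ depending only on $C,\mu,r,T-t$, and treat the stochastic integral as a continuous martingale $M$ whose quadratic variation is controlled deterministically, $\langle M\rangle_T\le C_2^2$, uniformly over all admissible $\Pi$ (admissibility being automatic since $|\Pi_t|\le C$). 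Thus pathwise $X_T\le e^{r(T-t)}x + C_1 + M$ with $M$ a mean-zero martingale subject to a universal sub-Gaussian control.

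I would then split into two cases according to whether $\bar U:=\sup_s U(s)$ is finite. In the infinite case I would use the concavity of $U$ on the positive axis to obtain a linear majorant $U(s)\le A+Bs$ for $s\ge 0$, together with $U(s)\le U(0)=0$ for $s\le 0$, giving $U(X_T)\le A+B\,X_T^+$. Since $\mathbb{E}[M^+]\le(\mathbb{E}[M^2])^{1/2}=(\mathbb{E}\langle M\rangle_T)^{1/2}\le C_2$, this bounds $\mathbb{E}[U(X_T)]$ by a finite constant uniform over strategies, so $V(t,x)<\infty=\bar U$ for every $x$.

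In the finite case the idea is to push the bulk of the law of $X_T$ into a region where $U$ is strictly below $\bar U$ by starting from a sufficiently negative $x$. Applying the exponential inequality for continuous martingales with bounded quadratic variation gives $\mathbb{P}(M>C_2)\le e^{-1/2}$, hence $\mathbb{P}(M\le C_2)\ge p_0:=1-e^{-1/2}>0$; on that event $X_T\le e^{r(T-t)}x+C_1+C_2=:e^{r(T-t)}x+C_3$. Choosing $x$ negative enough that $U(e^{r(T-t)}x+C_3)<0\le \bar U$, and bounding $U\le\bar U$ on the complementary event, I obtain
\begin{equation*}
\mathbb{E}[U(X_T)] \le p_0\, U\!\left(e^{r(T-t)}x+C_3\right) + (1-p_0)\,\bar U < \bar U ,
\end{equation*}
uniformly in $\Pi$, whence $V(t,x)=\sup_{\Pi}\mathbb{E}[U(X_T)]<\bar U$.

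The main obstacle, and the point requiring the most care, is uniformity: every estimate must hold with constants independent of the particular admissible strategy, since this is exactly what allows the bound to survive the supremum defining $V(t,x)$. The technical crux is the sub-Gaussian control of the martingale part $M$ through the deterministic bound on $\langle M\rangle_T$, which relies only on $|\Pi_t|\le C$ and on $\sigma>0$ and is insensitive to the sign of $r$ because $e^{r(T-t)}>0$. A pleasant feature is that the strict gap in the finite case comes for free from $U(\cdot)<0$ on the negative axis and $\bar U\ge U(0)=0$, so I need no separate discussion of whether the supremum $\bar U$ is attained at a finite point or only in the limit $s\to\infty$.
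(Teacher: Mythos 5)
Your proof is correct, but it takes a genuinely different route from the paper's. The paper also majorises $U$ by $m x^{+}+C$ and enlarges $K$ to $\bar{K}=[-b,b]$, but it then actually \emph{solves} the relaxed control problem: it identifies the constant control $\Pi^*\equiv b$ (or $-b$) as optimal for the convex payoff, computes the resulting Gaussian value function $w(t,x)$ in closed form as in \eqref{eq:upbound2}, invokes a verification argument to conclude $\bar{V}=w$, and finally lets $x\downarrow-\infty$ so that $w(t,x)\to C$ with $C\in(0,\sup_s U(s))$ chosen arbitrarily. You avoid solving any control problem: working pathwise from the explicit formula for $X_T$, you bound the drift deterministically and the martingale part $M$ through its deterministic quadratic-variation bound, then (i) when $\sup_s U(s)=\infty$ combine the linear majorant with $\mathbb{E}[M^{+}]\leq C_2$, and (ii) when $\sup_s U(s)<\infty$ use the exponential martingale inequality (Chebyshev would do as well) to pin a strategy-independent fraction $p_0$ of the law of $X_T$ below a deterministic level, yielding the uniform gap $\mathbb{E}[U(X_T)]\leq \bar{U}-p_0\bigl(\bar{U}-U(e^{r(T-t)}x+C_3)\bigr)<\bar{U}$. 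Your route is more elementary --- no HJB equation, no verification step, no convexity of an auxiliary value function --- and the uniformity over admissible strategies, which you correctly identify as the crux, is fully transparent. What the paper's heavier route buys is the explicit closed form \eqref{eq:upbound2}, which is reused later (for the linear-growth estimate in Proposition \ref{prop:vis} and in the multi-asset extension). One small caution: Assumption \ref{assump:standing} gives only $U\leq 0$ on the negative axis, not $U<0$, so in your finite-supremum case the strict gap should be drawn from $U(e^{r(T-t)}x+C_3)\leq 0<\sup_s U(s)$ rather than from strict negativity of $U$; the same implicit hypothesis $\sup_s U(s)>0$ underlies the paper's choice of $C$.
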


\begin{proof}
	
	
	Since $U(x)\leq 0$ for $x\leq 0$ and $U(x)$ is concave on $x>0$, for any constant $C>0$ there always exists $m>0$ such that $U(x)\leq \bar{U}(x):=m x^{+}+C$ for all $x$. Then
	\begin{align*}
	J(t,x;\Pi):=\mathbb{E}^{(t,x)}[U(X^{\Pi}_T)]\leq \mathbb{E}^{(t,x)}[\bar{U}(X^{\Pi}_T)].
	\end{align*} 
	Since $K$ is bounded, there exits $b\in(0,\infty)$ such that $K\subseteq [-b,b]=:\bar{K}$. Then
	\begin{align}
	V(t,x)=\sup_{\Pi_t\in\mathcal{A}(K)}J(t,x;\Pi)&\leq \sup_{\Pi_t\in\mathcal{A}(K)}\mathbb{E}^{(t,x)}[\bar{U}(X^{\Pi}_T)]\nonumber \\
	&\leq \sup_{\Pi_t\in\mathcal{A}(\bar{K})}\mathbb{E}^{(t,x)}[\bar{U}(X^{\Pi}_T)]=:\bar{V}(t,x).
	\label{eq:upbound}
	\end{align}
	
	We can now derive the expression of $\bar{V}(t,x)$ as the value function of a stochastic control problem with payoff function $\bar{U}$ which is increasing and convex. Formally, we expect $\bar{V}(t,x)$ to be the (viscosity) solution of the HJB equation
	\begin{align}
	\begin{cases}
	-V_t-\sup\limits_{\pi\in [-b,b]}\left\{[(\mu-r)\pi+rx]V_x+\frac{\sigma^2}{2}V_{xx} \pi^2\right\}=0,& t<T;\\
	V(t,x)=\bar{U}(x),& t=T.
	\end{cases}
	\label{eq:hjbmain}
	\end{align}
	
	Suppose $\mu\geq r$ and recall that $\bar{U}$ is convex. Since the dynamics of the portfolio process is $dX_t= [rX_t+(\mu-r)\Pi_t]dt+\sigma \Pi_t dB_t$ where its drift and volatility are both increasing in $\Pi_t$, we expect the optimal strategy is to choose the largest possible value of $\Pi_t$ within the bounded set $\bar{K}$. Hence the candidate optimal control for problem \eqref{eq:upbound} is $\Pi^*_t=b<\infty$. The corresponding candidate value function is thus
	\begin{align*}
	 w(t,x):=\mathbb{E}^{(t,x)}[U_2(X^{\Pi^*}_T)]
	\end{align*}
	and the wealth process under the candidate optimal control is
	\begin{align*}
	dX^*_t=rX^*_tdt + b\left[(\mu-r)dt+\sigma  dB_t\right].
	\end{align*}
	Then
	\begin{align*}
	X^*_s=x e^{r(s-t)}+\frac{b(\mu-r)}{r}(e^{r(s-t)}-1)+b\sigma \int_t^s e^{r(s-u)}dB_u \qquad\text{for }s\geq t \text{ and } X^*_t=x,
	\end{align*}
	such that $X^{*}_T$ is normally distributed with mean $xe^{r(T-t)}+\frac{b(\mu-r)}{r}(e^{r(T-t)}-1)$ and variance $\frac{b^2\sigma^2}{2r}(e^{2r(T-t)}-1)$. Upon evaluating the expectation, we obtain
	\begin{align}
	w(t,x)&=\mathbb{E}[C+m(X_T^*)^{+}]\nonumber \\
	&= C + m \Bigl[xe^{r(T-t)}+\frac{b(\mu-r)}{r}(e^{r(T-t)}-1)\Bigl] \Phi\left(\frac{xe^{r(T-t)}+\frac{b(\mu-r)}{r}(e^{r(T-t)}-1)}{b\sigma\sqrt{\frac{e^{2r(T-t)}-1}{2r}}}\right) \nonumber \\
	&\qquad + m b\sigma\sqrt{\frac{e^{2r(T-t)}-1}{2r}}\phi\left(\frac{xe^{r(T-t)}+\frac{b(\mu-r)}{r}(e^{r(T-t)}-1)}{b\sigma\sqrt{\frac{e^{2r(T-t)}-1}{2r}}}\right)
	\label{eq:upbound2}
	\end{align}
	where $\Phi$ and $\phi$ are the cdf and pdf of a standard $N(0,1)$ random variable respectively. $w(t,x)$ is indeed $C^{1\times 2}$ on $[0,T)\times \mathbb{R}$, and is increasing convex in $x$. It can be easily shown that $w$ is a solution to the HJB equation \eqref{eq:hjbmain}. Standard verification arguments then lead to the conclusion that $\bar{V}(t,x)=w(t,x)$. Finally, for each fixed $t$ we have $\bar{V}(t,x)=w(t,x)\to C$ as $x\downarrow -\infty$. But the constant $C>0$ can be arbitrarily chosen. Using the fact that $V(t,x)\leq \bar{V}(t,x)$, the desired result follows if we choose $C\in(0,\sup_s U(s))$. The case of $\mu<r$ can be handled similarly except that the optimal control will become $\Pi^*_t=-b$ instead.
\end{proof}

The implication of Proposition \ref{prop:delta_suff} is that a delta notional limit on the risky asset alone is sufficient to constrain a tail-risk-seeking trader. For an unconstrained problem, as discussed in the proof of Proposition \ref{prop:uncon} one can attain an arbitrarily high utility level by replicating some digital options. But it is known that the delta of a digital option can be unboundedly large when the time to maturity becomes short and the underlying stock price is near the strike. Hence a trader cannot replicate a digital option and hold the position until maturity while complying the dynamic risk constraint with certainty. In practice, a trading desk with a substantial at-the-money digital option position with short maturity will often be requested to wind-down the trade to reduce the pin risk.

Next we state the main theorem of this paper which provides a precise condition under which a dynamic VaR/ES constraint can effectively restrict a rough trader.

\begin{thm}
Recall the constants $M_{\text{VaR}}$ and $M_{\text{ES}}$ introduced in \eqref{eq:Mconst}. A dynamic Value at Risk constraint is effective if and only if $|\frac{\mu-r}{\sigma}|<M_{\text{VaR}}$. A dynamic Expected Shortfall constraint is effective if and only if $|\frac{\mu-r}{\sigma}|<M_{\text{ES}}$.
\label{thm:effective}
\end{thm}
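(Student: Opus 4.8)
The plan is to read off both implications from the two preparatory results, Lemma~\ref{lem:setA} and Proposition~\ref{prop:delta_suff}, combined with a replication argument that recycles the digital construction of Proposition~\ref{prop:uncon}. Throughout I fix $i\in\{\text{VaR},\text{ES}\}$, recall $M_i>0$, and note that by Proposition~\ref{prop:uncon} one always has $V_i(t,x)\leq V_0(t,x)=\sup_s U(s)$, so effectiveness is about achieving strict inequality.

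The \emph{sufficiency} (``if'') direction is essentially immediate. If $|\frac{\mu-r}{\sigma}|<M_i$, then case~(2) of Lemma~\ref{lem:setA} gives that $K_i=[k_1^i,k_2^i]$ is bounded, and Proposition~\ref{prop:delta_suff} applied with $K=K_i$ produces, for every $t<T$, some $x$ with $V_i(t,x)<\sup_s U(s)$, which is precisely effectiveness in the sense of Definition~\ref{def:effective}.

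For the \emph{necessity} (``only if'') direction I would prove the contrapositive: if $|\frac{\mu-r}{\sigma}|\geq M_i$ then the constraint is not effective, showing in fact that $V_i(t,x)=\sup_s U(s)$ for every $t<T$ and every $x$. By cases~(1) and~(3) of Lemma~\ref{lem:setA}, $K_i$ is a half-line; I treat $\frac{\mu-r}{\sigma}\geq M_i>0$ first, so that $\mu>r$ and $K_i=[k_1^i,\infty)$ with $k_1^i<0$, the opposite sign being symmetric. The crux is a monotonicity observation about the digital payoff \eqref{eq:digital}. Since $\mu>r$, the pricing kernel $\xi_{t,T}$ is a strictly decreasing function of $S_T$, so the large-loss event $\{\xi_{t,T}>k\}$ coincides with a down-move of the stock and the digital payoff is an increasing function $g(S_T)$ of the terminal stock price. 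Its replicating value process $C(\tau,S_\tau)$ is then increasing in $S$, so the replicating dollar position $\Pi_\tau=S_\tau\,\partial_S C(\tau,S_\tau)\geq 0>k_1^i$ lies in $K_i$ at every time $\tau\in[t,T]$. Hence the strategies that realise the unconstrained supremum are already admissible for the constrained problem: exactly as in Proposition~\ref{prop:uncon}, for each $(t,x)$ one builds digitals whose expected utility tends to $U(b)$ and lets $b\to\infty$, yielding $V_i(t,x)\geq\sup_s U(s)$ and thus equality. When $\frac{\mu-r}{\sigma}\leq -M_i$ one instead uses a payoff decreasing in $S_T$, so that $\Pi_\tau\leq 0<k_2^i$ and $\Pi_\tau\in K_i=(-\infty,k_2^i]$.

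I expect the main obstacle to be the integrability bookkeeping in the necessity direction rather than any conceptual point. One must check that the replicating strategy genuinely belongs to $\mathcal{A}_0$, i.e. $\int_0^T\Pi_\tau^2\,d\tau<\infty$ almost surely, even though the delta of a digital blows up near maturity around the strike; along any fixed path ending away from the strike the delta decays super-exponentially as $\tau\uparrow T$, so the integral is finite off a null set, but to keep things clean I would approximate the discontinuous digital by smooth increasing payoffs with bounded nonnegative delta, which cost only $\varepsilon$ in expected utility and still lie in $K_i$. One also has to re-run the budget computation of Proposition~\ref{prop:uncon} from a general node $(t,x)$ using the conditional kernel $\xi_{t,T}$, which remains unbounded above and so makes the constraint feasible for large $k$; this is a routine translation. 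Assembling the two directions with the bound $V_i\leq V_0=\sup_s U(s)$ then closes the theorem.
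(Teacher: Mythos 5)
Your proposal is correct and follows essentially the same route as the paper: sufficiency via Lemma~\ref{lem:setA} plus Proposition~\ref{prop:delta_suff}, and necessity via the observation that for $|\frac{\mu-r}{\sigma}|\geq M_i$ the set $K_i$ is a half-line and the utility-maximising digital payoffs from Proposition~\ref{prop:uncon} are monotone in $S_T$, so their replicating deltas have the unconstrained sign and the strategies remain admissible. Your extra care about integrability of the digital's delta near maturity is a point the paper leaves implicit, but it does not change the argument.
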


\begin{proof}
	In view of Lemma \ref{lem:setA} and Proposition \ref{prop:delta_suff} we only need to prove the ``only if'' part of the theorem. 

	In the proof of Proposition \ref{prop:uncon}, a utility level of $\sup_{s}U(s)$ can be attained by replicating a sequence of payoffs in form of $$X_T=a {\mathbbm 1}_{(\xi_T>k)}+b {\mathbbm 1}_{(\xi_T\leq k)}$$ with $a<0<b$ where $\xi_T$ is the pricing kernel in the Black-Scholes economy. But $$\xi_T=\exp\left[-\frac{\mu-r}{\sigma} B_T-\left(r+\frac{1}{2}\lambda^2\right) T\right]\propto S_T^{-\frac{\mu-r}{\sigma^2}}.$$ Hence $X_T$ is increasing (resp. decreasing) in $S_T$ if $\mu> r$ (resp. $\mu< r$). 
	
	Suppose $\frac{\mu-r}{\sigma}\geq M_{i}>0$ where $i\in\{\text{VaR},\text{ES}\}$. Then by Lemma \ref{lem:setA} the admissible set is in form of $[k_1^i,\infty)$ where $k_1^i\in(-\infty,0)$. In other words, there is no restriction on the investment level for as long as only long position is taken. But since $\mu>r$, if we view $X_T$ as a contingent claim written on the risky asset, the payoff $X_T=X(S_T)$ is an increasing function and thus the option must have non-negative delta for all $(t,x)$. Hence only long position is ever required to replicate this claim. The sequence of strategies replicating the digital options which yield a utility level of $\sup_s U(s)$ must also belong to $\mathcal{A}(K_i)$ as well. In this case, the dynamic risk $i$ constraint is not effective. Similar results hold for the case of $\frac{\mu-r}{\sigma}\leq -M_i<0$.
\end{proof}

A dynamic risk constraint $i\in\{\text{VaR},\text{ES}\}$ restricts a tail-risk-seeking trader if and only if the (magnitude of) Sharpe ratio is smaller than the constant $M_i$. Surprisingly, from the definition of $M_i$ in \eqref{eq:Mconst} we see that it does not depend on the risk limit level $R$ at all but only the evaluation horizon $\Delta$, confidence level $\alpha$ and interest rate $r$. In other words, increasing the risk limit alone is not sufficient to guarantee the effectiveness of a dynamic risk measure. The risk manager must impose a short evaluation horizon window (small $\Delta$) and emphasise on the extreme tail of the loss distribution (small $\alpha$) to ensure the necessary and sufficient condition of dynamic risk measure effectiveness $|\frac{\mu-r}{\sigma}|<M_{i}$ is satisfied. But given a dynamic risk constraint is effective, the risk limit $R$ will play a role in controlling the implied delta notional limit as per the expressions of $k^i_1$ and $k^i_2$ in Lemma \ref{lem:setA}.

The main driver behind the effectiveness of a dynamic risk measure is that the risk constraint implies a hard bound on the delta notional to be taken by the trader. Indeed, there is no economic difference between imposing a delta limit and a more complicated risk measure such as VaR or ES, as the following corollary shows.
\begin{cor}
An effective dynamic risk constraint $i\in\{\text{VaR},\text{ES}\}$ is equivalent to imposing a delta notional limit on the underlying risky asset. i.e. if a dynamic constraint $i$ is effective, then there exists a bounded set $D\subseteq \mathbb{R}$ such that
\begin{align*}
V_i(t,x)=V_d(t,x):=\sup_{\Pi\in\mathcal{A}(D)}\mathbb{E}^{(t,x)}[U(X_T)].
\end{align*} 
\label{cor:equi}
\end{cor}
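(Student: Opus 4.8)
The plan is to chain the two preceding results so that effectiveness is converted into boundedness of the admissible set, after which the claim becomes essentially a tautology. First I would invoke Theorem~\ref{thm:effective}: the hypothesis that the dynamic constraint $i$ is effective is, by that theorem, equivalent to the Sharpe-ratio condition $|\frac{\mu-r}{\sigma}|<M_i$. This is precisely the regime of case~(2) of Lemma~\ref{lem:setA}, so I would then apply that case to conclude that $K_i=[k_1^i,k_2^i]$ with $-\infty<k_1^i<0<k_2^i<\infty$; in particular $K_i$ is a bounded interval.

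The second step is the conceptual observation that a bounded interval constraint on $\Pi_t$ is exactly what is meant by a delta notional limit: requiring $\Pi_t\in[k_1^i,k_2^i]$ simply caps the short notional below by $k_1^i$ and the long notional above by $k_2^i$. Hence the canonical choice $D:=K_i$ is a bounded set encoding a delta notional limit. With this choice the two admissible classes coincide verbatim, $\mathcal{A}(K_i)=\mathcal{A}(D)$, directly from the definition of $\mathcal{A}(\cdot)$ as the strategies taking values in the prescribed set, and therefore $V_i(t,x)=V_d(t,x)$ for every $(t,x)$.

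There is essentially no analytic obstacle here — the corollary is a reinterpretation of Theorem~\ref{thm:effective} rather than a new computation — so the only point requiring care is the logical bookkeeping. I must use the full \emph{equivalence} in Theorem~\ref{thm:effective} (effective $\Leftrightarrow$ $|\frac{\mu-r}{\sigma}|<M_i$) rather than merely the sufficiency supplied by Proposition~\ref{prop:delta_suff}, since it is the forward implication (effective $\Rightarrow$ $K_i$ bounded) that is needed to manufacture the set $D$. If a less degenerate-looking statement were desired, I would remark that any bounded $D$ inducing the same admissible strategies serves equally well, but $D=K_i$ is the natural and minimal choice and makes the identity of value functions immediate.
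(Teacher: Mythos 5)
Your proposal is correct and follows essentially the same route as the paper, whose proof is simply the one-line citation of Lemma~\ref{lem:setA}; you have merely made explicit the intended chain (effectiveness $\Rightarrow$ $|\frac{\mu-r}{\sigma}|<M_i$ via Theorem~\ref{thm:effective} $\Rightarrow$ $K_i=[k_1^i,k_2^i]$ bounded $\Rightarrow$ take $D=K_i$). Your remark that the forward implication of Theorem~\ref{thm:effective} is the direction actually needed is a fair point of bookkeeping that the paper leaves implicit.
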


\begin{proof}
This follows immediately from Lemma \ref{lem:setA}.
\end{proof}

The next proposition gives a theoretical characterisation of the value function. 

\begin{prop}
Suppose the model parameters are such that  $|\frac{\mu-r}{\sigma}|<M_i$. Then the value function of the optimisation problem \eqref{eq:valFun} under dynamic risk constraint $i$ is the unique viscosity solution to the HJB equation
\begin{align}
-V_t-H_i(x,V_x,V_{xx})=0,\qquad t<T,
\label{eq:hjb}
\end{align}
subject to terminal condition $V(T,x)=U(x)$ and linear growth condition $V(t,x)\leq c(1+|x|)$ for some $c>0$. Here $H_i$ is the Hamiltonian defined as
\begin{align}
H_i(x,p,M):=\sup_{\pi\in K_i}\left\{[(\mu-r)\pi+rx]p+\frac{\sigma^2}{2}M \pi^2\right\}.
\label{eq:ham}
\end{align}
\label{prop:vis}
\end{prop}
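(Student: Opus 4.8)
**The plan is to establish this in three standard stages: (i) show the value function $V_i$ is a viscosity solution of the HJB equation; (ii) verify $V_i$ satisfies the linear growth bound; and (iii) invoke a comparison principle to conclude uniqueness.** The overall strategy is the classical dynamic-programming route for controlled diffusions, adapted to handle the non-smooth, merely continuous utility $U$ and the compact control set $K_i$ that Lemma~\ref{lem:setA} guarantees under the hypothesis $|\frac{\mu-r}{\sigma}|<M_i$.

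First I would establish the dynamic programming principle (DPP) for $V_i$. Because $K_i$ is compact (by Lemma~\ref{lem:setA}, case (2)), the admissible controls take values in a fixed bounded set, so the wealth process \eqref{eq:wealth} has uniformly bounded diffusion and linearly-bounded drift coefficients; standard estimates then give finite moments of $\sup_{s\le T}|X^\Pi_s|$ uniformly over $\Pi\in\mathcal{A}(K_i)$, which guarantees $V_i$ is finite and, together with Assumption~\ref{assump:standing} (continuity of $U$ and sublinear left tail), that $V_i$ satisfies the linear growth bound $V_i(t,x)\le c(1+|x|)$ — this handles stage (ii). The DPP itself is the usual sub/super-optimality formulation over stopping times; the main technical care is measurable selection, but since $K_i$ is compact and the coefficients are continuous in $\pi$, the Hamiltonian $H_i$ in \eqref{eq:ham} is a continuous (in fact locally Lipschitz) function of $(x,p,M)$ and the sup is attained.

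From the DPP, the viscosity sub- and supersolution properties follow by the textbook argument: test against a smooth $\varphi$ touching $V_i$ from above/below at an interior point, apply the DPP over a short horizon, use Itô's formula on $\varphi$, divide by the time increment and let it vanish, using continuity of $H_i$ and dominated convergence (justified by the uniform moment bounds). The terminal condition $V(T,x)=U(x)$ follows from continuity of $U$ and the bounded-control estimates showing $X^\Pi_T\to x$ as $t\uparrow T$ uniformly enough to pass to the limit. I would note that lower semicontinuity of $V_i$ (needed to take a well-defined supersolution) is where the compactness of $K_i$ is again essential, since it rules out the degenerate digital-replication phenomenon from Proposition~\ref{prop:uncon}.

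\textbf{The main obstacle is uniqueness, i.e.\ the comparison principle for \eqref{eq:hjb} within the class of functions of linear growth.} The subtlety is that the diffusion coefficient $\frac{\sigma^2}{2}\pi^2$ can vanish (when $0\in K_i$, the operator degenerates at $\pi=0$), so the equation is a \emph{degenerate} parabolic HJB equation on the unbounded domain $\mathbb{R}$, and one cannot appeal to uniform ellipticity. The plan is to use the standard Crandall--Ishii doubling-of-variables technique with a quadratic or logarithmic penalisation to control behaviour at infinity, exploiting that $H_i$ is Lipschitz in $(p,M)$ uniformly on the compact set $K_i$ and that its $x$-dependence enters only through the linear term $rx\,p$. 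The linear growth restriction is exactly what selects $V_i$ among possibly many viscosity solutions, so the comparison argument must be carried out in that growth class; I would either cite a general comparison result for degenerate HJB equations with controlled linear growth (e.g.\ in the style of the user's guide of Crandall--Ishii--Lions or the monograph of Fleming--Soner) or sketch the doubling argument, taking care that the penalty term dominates the linear-growth mismatch at infinity. Once comparison holds, uniqueness is immediate and the proof is complete.
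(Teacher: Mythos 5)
Your proposal is correct and follows essentially the same route as the paper: establish the linear growth bound (the paper gets the upper bound from the explicit Gaussian computation in the proof of Proposition~\ref{prop:delta_suff} and the lower bound from the zero-investment strategy, while you use uniform moment estimates for the bounded-control SDE), then invoke the standard viscosity-solution characterisation and a strong comparison principle in the linear-growth class, which the paper simply cites as Theorem~4.4.5 of Pham (2009). The extra detail you supply on the DPP, the degeneracy of the operator, and the Crandall--Ishii doubling argument is exactly what that citation packages up, so there is no substantive difference in approach.
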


\begin{proof}
Provided that $|\frac{\mu-r}{\sigma}|<M_i$, the set $K_i$ is bounded and hence by \eqref{eq:upbound2} we can deduce that $V(t,x)\leq \alpha_1+\beta_1 x$ for some constant $\alpha_1>0$ and $\beta_1>0$. On the other hand, the utility function $U$ is a negative convex increasing function on $x<0$. Hence there exists $\alpha_2>0$ and $\beta_2>0$ such that $U(x)>-\alpha_2 -\beta_2 x^{-}=:G(x)$ for all $x$. Then since $\tilde{\Pi}_t=0$ for all $t$ is an admissible strategy in $\mathcal{A}(K_i)$, we have
\begin{align*}
V_i(t,x)=\sup_{\Pi\in\mathcal{A}(K_i)}\mathbb{E}^{(t,x)}\left[U(X_T^\Pi)\right]\geq \mathbb{E}^{(t,x)}\left[G(X_T^{\tilde{\Pi}})\right]=-\alpha_2-\beta_2 e^{r(T-t)}x^{-}
\end{align*}
for all $(t,x)$. Thus we conclude $V_i(t,x)\leq c(1+|x|)$ for some $c>0$, i.e. the value function has at most a linear growth. Finally, since $K_i$ is bounded the Hamiltonian in \eqref{eq:ham} is always finite. Standard theory of stochastic control suggests that the value function $V_i$ is a viscosity solution to the HJB equation \eqref{eq:hjb}. Moreover, the solution is indeed unique in the class of viscosity solutions with linear growth due to strong comparison principle. See Theorem 4.4.5 of \cite{pham09}.
\end{proof}

Proposition \ref{prop:vis} provides a characterisation of the value function in terms of viscosity solution, which serves as a useful basis for implementation of numerical methods to solve the HJB equation. In general, it is difficult to make further analytical progress to extract meaningful economic intuitions from the solution structure. Nonetheless, in Section \ref{sect:zerodrift} we will show that further characterisation of the optimal portfolio strategy is indeed possible under a special case of $\mu=r$. 

For now, we numerically solve the portfolio optimisation problem for the more general case of $\mu\neq r$. Two specifications of utility function are considered: the \cite{kahneman-tversky79} piecewise power form of
\begin{align*}
U(x)=
\begin{cases}
x^{\beta_1},&x\geq 0 \\
-k|x|^{\beta_2},& x<0
\end{cases}
\end{align*}
with $0<\beta_1,\beta_2<1$ and $k>0$, and the piecewise exponential form of
\begin{align*}
U(x)=
\begin{cases}
\phi_1 (1-e^{-\gamma_1 x}),&x\geq 0 \\
\phi_2 (e^{\gamma_2 x}-1),& x<0
\end{cases}
\end{align*}
with $\phi_1,\phi_2,\gamma_1,\gamma_2>0$.
 
A fully implicit discretisation scheme with Newton-type policy iteration is used to solve the HJB equation \eqref{eq:hjb}. See \cite{forsyth-labahn07} for a description of the algorithm and the relevant conditions for convergence. The implementation of numerical methods is quite straightforward and we briefly discuss two practical issues relevant to our specific problem: First, the value function of our portfolio optimisation problem is defined on an unbounded domain $[0,T]\times \mathbb{R}$. As an approximation, we only solve for the numerical solutions on a bounded domain $[0,T)\times [-x_{min},x_{max}]$ for some large $x_{min}>0$ and $x_{max}>0$. An artificial boundary condition $V(t,x)=U(x)$ is imposed along $[0,T)\times\{-x_{min},x_{max}\}$ and then we focus on the solution behaviours on a narrow range away from the boundary points. We observe that the numerical results are not sensitive to the choice of $x_{min}$ and $x_{max}$ provided that their values are sufficiently large. Second, we focus on a parameter choice of $r=0$ to ensure that the ``positive coefficient condition'' of the finite difference scheme (Condition 4.1 of \cite{forsyth-labahn07}) is satisfied when the step size along the $x$-axis is sufficiently small. But the more general case of non-zero interest rate can be recovered by change of numeraire.

\begin{figure}[!htbp]
	\captionsetup[subfigure]{width=0.5\textwidth}
	\centering
	\subcaptionbox{S-shaped power utility function.}{\includegraphics[scale =0.515] {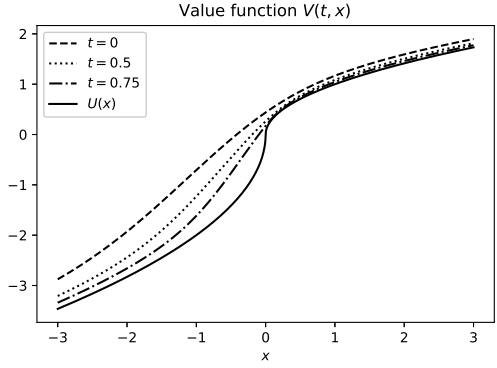}}
	\subcaptionbox{S-shaped power utility function.}{\includegraphics[scale =0.515]{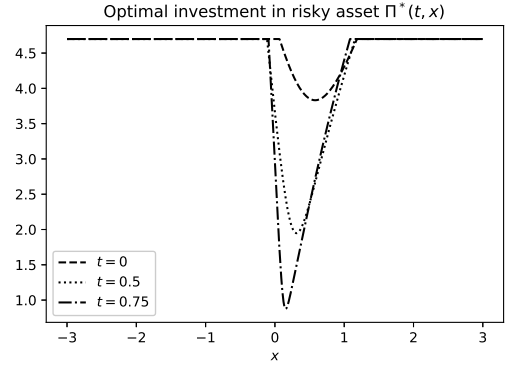}}
	\subcaptionbox{S-shaped exponential utility function.}{\includegraphics[scale =0.515]{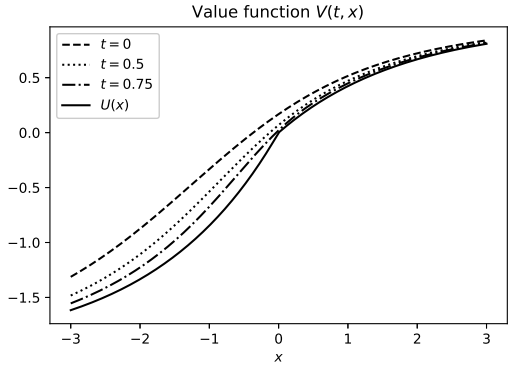}}
	\subcaptionbox{S-shaped exponential utility function.}{\includegraphics[scale =0.515]{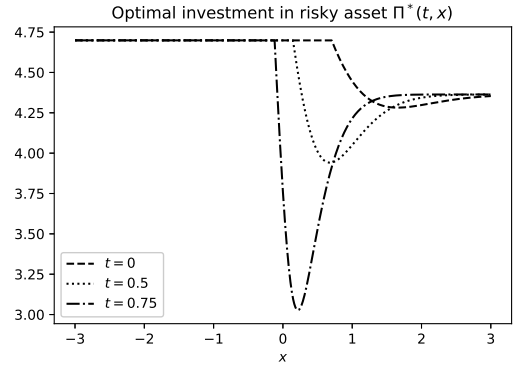}}
	\caption{Value function and optimal investment level at different selected time points under S-shaped power and exponential utility function. Parameters used are $\mu=0.15$, $\sigma=0.25$, $r=0$, $T=1$, $\alpha=0.01$, $R=1$, $\Delta=30/250$, $\beta_1=\beta_2=0.5$, $k=2$, $\gamma_1=\gamma_2=0.55$, $\phi_1=1$ and $\phi_2=2$.}
	\label{fig:valfun}
\end{figure}

Figure \ref{fig:valfun} shows the value functions and the corresponding optimal investment levels at several different time points. In general, the agents will adopt the largest possible risk exposure when the portfolio value is negative due to risk-seeking over losses induced by the convex segment of the utility function. Investment level is the lowest when the portfolio value is at a small positive level. It is perhaps not too surprising because local risk-aversion is typically the highest for small positive wealth level. Meanwhile, the investment behaviours for larger positive wealth depend on the precise utility function of the agents. In the piecewise power (i.e. constant relative risk aversion alike) specification, investment level increases with wealth until it hits the delta limit implied by the dynamic risk constraint. For the piecewise exponential (i.e. constant absolute risk aversion alike) specification, the investment level will flat out at a constant level as wealth increases.

\begin{figure}[!htbp]
	\captionsetup[subfigure]{width=0.5\textwidth}
	\centering
	\subcaptionbox{S-shaped power utility function.}{\includegraphics[scale =0.515] {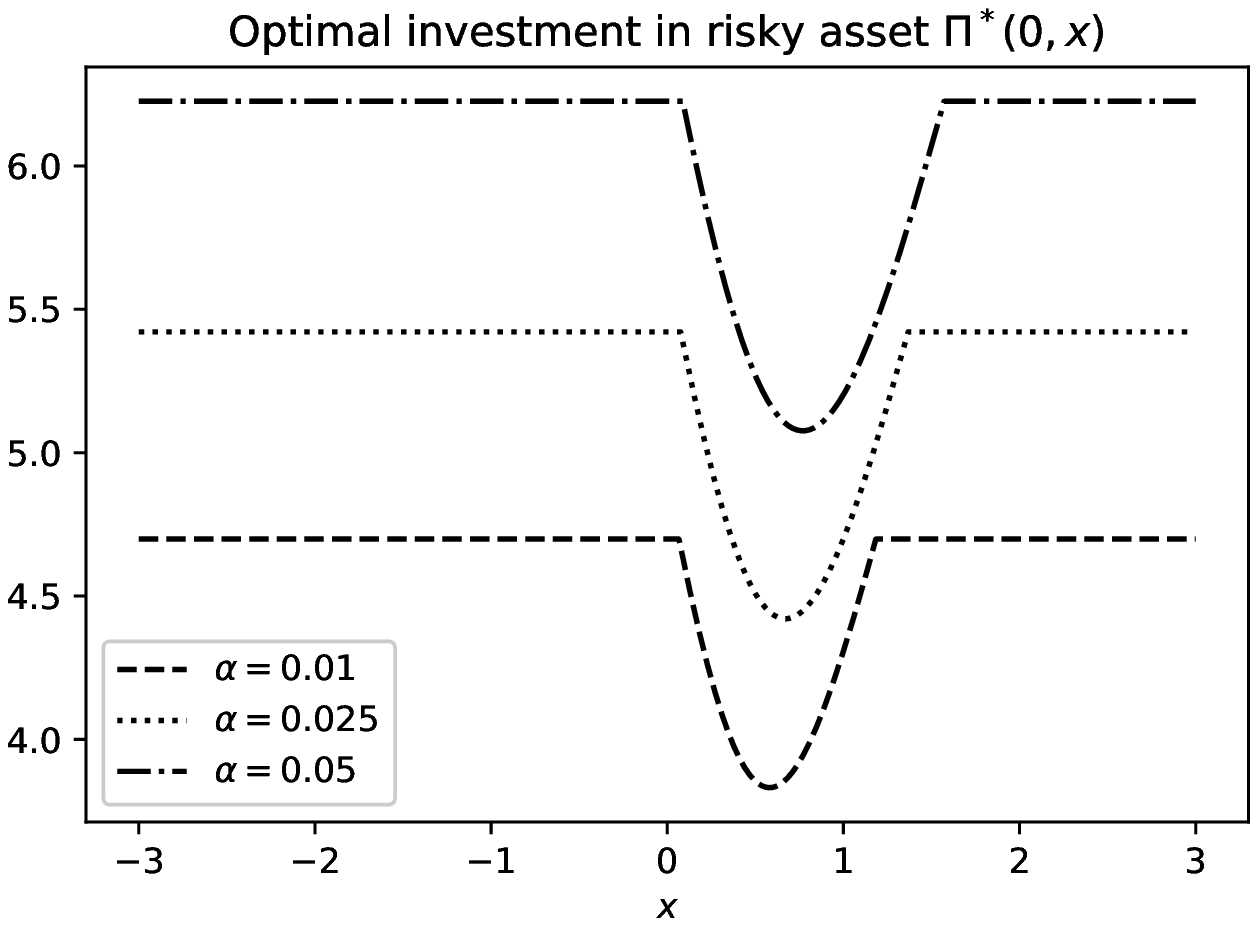}}
	\subcaptionbox{S-shaped exponential utility function.}{\includegraphics[scale =0.515]{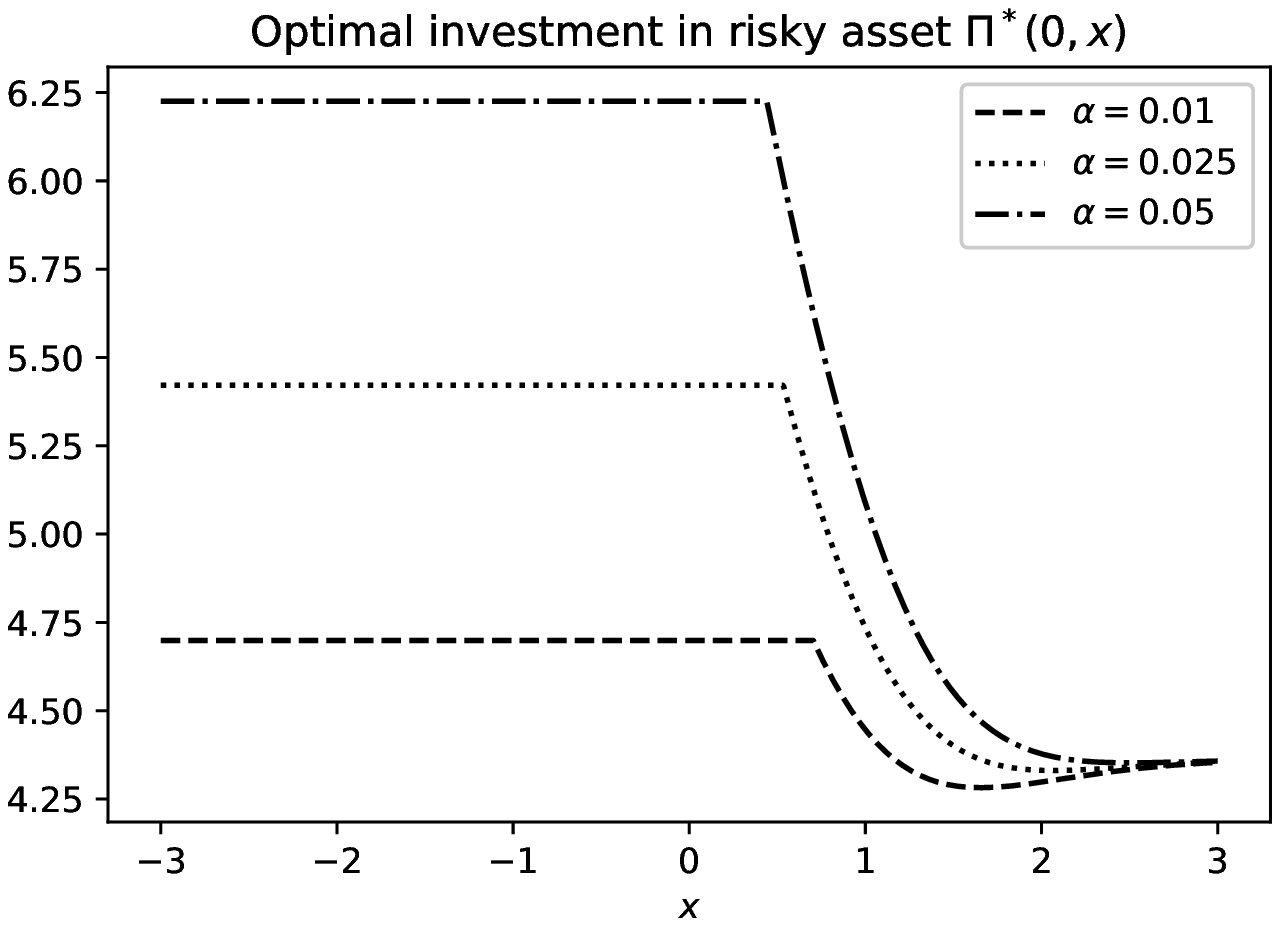}}
	\caption{Optimal investment level at $t=0$ under S-shaped power and exponential utility function for different values of Expected Shortfall confidence level $\alpha$. Based parameters used are $\mu=0.15$, $\sigma=0.25$, $r=0$, $T=1$, $\alpha=0.01$, $R=1$, $\Delta=30/250$, $\beta_1=\beta_2=0.5$, $k=2$, $\gamma_1=\gamma_2=0.55$, $\phi_1=1$ and $\phi_2=2$.}
	\label{fig:compstat}
\end{figure}

Figure \ref{fig:compstat} shows how the optimal investment level changes with the Expected Shortfall significance level. The results are intuitive: tighter the risk limit, more conservative the portfolio strategy.

We can measure in monetary terms the impact of a dynamic risk constraint on both the tail-risk-seeking trader and a risk averse manager who derives utility from the terminal value of the portfolio managed by the trader. Under a given set of model parameters, the maximal expected utility of the trader $V(t,x)$ and the optimal trading strategy $\Pi^*$ can be computed numerically. The certainty equivalent (CE) of the trader (with capital $x$ at time $t$) is defined as the value $C$ such that $V(t,x)=U(C)$. Economically, it is the fixed amount of wealth to be endowed by the trader to make him indifferent between this endowment and the opportunity to trade under a dynamic risk constraint. Likewise, the CE of the manager is defined as the value of $C$ solving $\mathbb{E}^{(t,x)}[U_m(X_T^{\Pi^*})]=U_m(C)$ where $U_m(\cdot)$ is the concave utility function of the manager.

As an example, consider a tail-risk-seeking trader with a unit of initial capital $x_0=1$ and his utility function has a piecewise power form. The risk averse manager has a utility function of $U_m(x)=-e^{-\eta x}$ and he imposes a dynamic ES constraint to risk-control the trader. Figure \ref{fig:ce} shows the time-zero CE of both the trader and the manager as a function of the risk limit $R$. 

\begin{figure}[!htbp]
	\centering
	\includegraphics[scale =0.6]{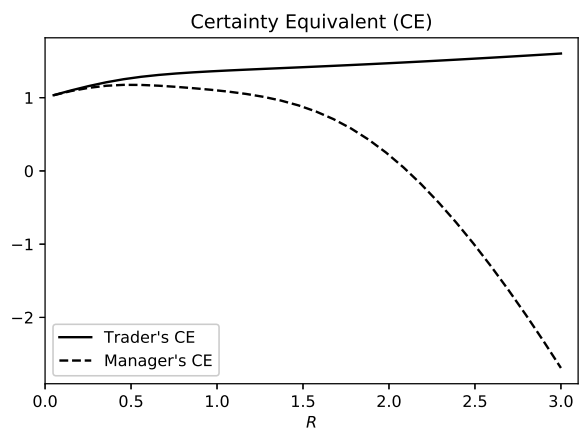}
	\caption{The time-zero certainty equivalent of the trader (with piecewise power utility function) and the risk averse manager (with exponential utility function) against the risk limit of a dynamic ES constraint. Parameters used are $\mu=0.15$, $\sigma=0.25$, $r=0$, $T=1$, $\alpha=0.01$, $\Delta=30/250$, $k=2$, $\alpha_1=\alpha_2=0.5$, $\eta=1$ and $x_0=1$.}
	\label{fig:ce}
\end{figure}

When $R$ is very close to zero, the CE of the trader and the risk manager are both around unity which is the initial trading capital. This is not surprising because under a very tight risk limit the trader essentially cannot purchase any risky asset. The portfolio under an admissible strategy is then almost riskless and the CE simply becomes the initial capital available (multiplied by the interest rate factor).

As $R$ increases, the CE of the trader gradually increases because a larger value of $R$ means the trader becomes less risk-constrained and therefore must be better off economically. On the other hand, the CE of the manager first increases slightly but then drops significantly. The CE of the manager improves at the beginning because a small but non-zero risk limit encourages the trader to invest conservatively in the risky asset which in turn creates value for the risk averse manager. However, when the risk limit is further relaxed, the trader takes more and more risk which starts becoming detrimental to the risk averse manager. Once $R$ goes above around 130\% of the initial capital, the CE of the manager goes below unity meaning that the trading activity now causes value destruction from the perspective of the manager. Indeed, when $R$ becomes arbitrarily large, Proposition \ref{prop:uncon} implies that the trader's CE will go to positive infinity while Theorem 5.4 of \cite{armstrong-brigo19} suggests the CE of the risk averse manager will become negative infinity. Figure \ref{fig:ce} highlights the conflict of interests between a tail-risk-seeking trader and a risk averse manager, and a slack risk management policy could easily result in drastic economic losses faced by the bank.

\section{A special case of zero excess return}
\label{sect:zerodrift}

Proposition \ref{prop:vis} provides a theoretical characterisation of the value function. However, it does not tell us much about the behaviours of the optimal portfolio strategy. In this section, we focus on a setup with $\mu=r=0$ (the assumption of $r=0$ is imposed for convenience only. The slightly more general case of $\mu=r$ can be handled by a change of numeraire technique.) The key idea is that in this special case we can exploit an equivalence between the risk-constrained portfolio optimisation problem and an optimal stopping problem. We show that the optimal trading strategy can be characterised in terms of a stopping time. As we will see soon, the state-space $[0,T]\times\mathbb{R}$ of the problem can be split into two regions: a trading region where the maximum possible amount is invested in the risky asset ($\Pi^*=k$ for some constant $k$) and a no-trade region where the agent opts to hold a pure cash position ($\Pi^*=0$).

As a preliminary discussion, investment motive vanishes in the case of $\mu=r=0$. Then whether the trader would participate in a fair gamble is purely driven by his risk appetite. Due to the S-shaped utility function, the trader is risk seeking over the domain of losses whereas he is risk averse over the domain of gains. Simple economic intuitions suggest that the trader prefers to gambling when the portfolio value is low, and prefers to taking all the risk off when the portfolio value is high. We therefore postulate that the optimal portfolio strategy has a ``bang-bang'' feature where the agent invests the maximum possible amount in the risky asset when the portfolio value is low. Once the portfolio value becomes sufficiently high, the trader's risk aversion dominates and he will immediately liquidate the entire holding in the risky asset. The postulated strategy can be stated in terms of a stopping time: the portfolio value evolves as a Brownian motion with maximum volatility (under the most aggressive admissible strategy) and stops when the agent decides to sell his entire risky asset holding and the portfolio value will remain unchanged thereafter. This inspires us to consider a simple optimal stopping problem introduced in the following subsection.

\subsection{An optimal stopping problem}

We introduce below an optimal stopping problem and verify some properties of its solution structure. Towards the end of this subsection, we will show that this optimal stopping problem and the risk-constrained portfolio optimisation problem \eqref{eq:valFun} are indeed equivalent. Before proceeding, we need to impose some slightly stronger assumptions on the utility function $U$ throughout this section.

\begin{assump}
The utility function $U:\mathbb{R}\to\mathbb{R}$ is a continuous, strictly increasing and strictly concave (resp. convex) $C^{2}$ function on $x>0$ (resp. $x<0$) with $U(0)=0$ and $\lim_{x\to +\infty} U'(x)=0$.
\label{assump:special}
\end{assump}

\begin{prop}
Suppose $X=(X_t)_{t\geq 0}$ has the dynamics of $dX_t=\nu dB_t$ where $\nu>0$ is a constant. Define an optimal stopping problem
\begin{align}
W(t,x):=\sup_{\tau\in\mathcal{T}_{t,T}}\mathbb{E}^{(t,x)}[U(X_{\tau})]
\label{eq:OSTvalfun}
\end{align}
where $\mathcal{T}_{t,T}$ is the set of $\mathcal{F}_t$-stopping times valued in $[t,T]$. The value function of problem \eqref{eq:OSTvalfun} is the unique viscosity solution to the HJB variational inequality
	\begin{align}
	\begin{cases}
	\min\left(-W_t-\frac{\nu^2}{2}W_{xx},W-U\right)=0,& t<T;\\
	W(T,x)=U(x),&t=T.
	\end{cases}
	\label{eq:hjbstop}
	\end{align}
	Define the continuation set $\mathcal{C}$ and the stopping set $\mathcal{S}$ as
	\begin{align}
	\mathcal{C}:=\{(t,x)\in [0,T)\times \mathbb{R}:W(t,x)>U(x)\},\quad \mathcal{S}:=\{(t,x)\in [0,T]\times \mathbb{R}:W(t,x)=U(x)\}.
	\label{eq:setCS}
	\end{align}
	The optimal stopping time is given by $\tau^*=\inf\{u\geq t: (u,X_u)\in \mathcal{S}\}$.
	
\end{prop}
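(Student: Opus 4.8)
The plan is to treat this as a standard finite-horizon optimal stopping problem for the scaled Brownian motion $dX_t=\nu\,dB_t$, and to proceed in four stages: (i) well-posedness, (ii) the dynamic programming principle and the resulting viscosity characterisation, (iii) uniqueness via comparison, and (iv) optimality of the first-entry time $\tau^*$. As a preliminary, I would record that $U$ has sub-linear growth on both tails --- concavity with $\lim_{x\to+\infty}U'(x)=0$ from Assumption~\ref{assump:special} controls the right tail, and $\lim_{x\to-\infty}U(x)/x=0$ from the standing Assumption~\ref{assump:standing} controls the left tail --- so $|U(x)|\le c(1+|x|)$. Since $X_s=x+\nu(B_s-B_t)$ has Gaussian moments and $\mathbb{E}^{(t,x)}[\sup_{t\le s\le T}|X_s|]<\infty$ by Doob's inequality, the family $\{U(X_\tau):\tau\in\mathcal{T}_{t,T}\}$ is uniformly integrable; hence $W$ is finite with $|W(t,x)|\le c'(1+|x|)$, and continuity of $W$ follows from standard estimates on the Gaussian kernel.

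Next I would establish the dynamic programming principle: for every $\theta\in\mathcal{T}_{t,T}$,
\begin{equation*}
W(t,x)=\sup_{\tau\in\mathcal{T}_{t,T}}\mathbb{E}^{(t,x)}\big[\mathbbm{1}_{\{\tau<\theta\}}U(X_\tau)+\mathbbm{1}_{\{\tau\ge\theta\}}W(\theta,X_\theta)\big].
\end{equation*}
From this, the viscosity super- and subsolution properties of $W$ relative to \eqref{eq:hjbstop} are obtained by the usual test-function argument. Taking $\theta$ to be the exit time of $X$ from a small parabolic neighbourhood and Taylor-expanding a smooth test function touching $W$ from above yields both $W\ge U$ and $-W_t-\frac{\nu^2}{2}W_{xx}\ge0$ in the supersolution sense; touching from below and exploiting that stopping immediately is always admissible yields the complementarity $\min(-W_t-\frac{\nu^2}{2}W_{xx},\,W-U)\le0$ in the subsolution sense. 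The terminal condition $W(T,\cdot)=U$ is immediate, the only admissible stopping time at $t=T$ being $\tau=T$. Uniqueness within the class of viscosity solutions of linear growth then follows from a strong comparison principle for parabolic obstacle problems: I would invoke the same machinery already cited for Proposition~\ref{prop:vis}, namely Theorem~4.4.5 of \cite{pham09}, after verifying that the obstacle $U$ and the heat operator $-\partial_t-\frac{\nu^2}{2}\partial_{xx}$ meet the structural and growth hypotheses on the unbounded domain $[0,T]\times\mathbb{R}$; the linear growth bound established above is precisely what makes the comparison argument applicable.

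Finally, for optimality of $\tau^*=\inf\{u\ge t:(u,X_u)\in\mathcal{S}\}$, the inequality $W(t,x)\ge\mathbb{E}^{(t,x)}[U(X_\tau)]$ holds for every $\tau$ by definition, so it suffices to prove $\mathbb{E}^{(t,x)}[U(X_{\tau^*})]\ge W(t,x)$. Continuity of $W$ and $U$ makes $\mathcal{S}$ closed and $\mathcal{C}$ open; by interior parabolic regularity $W$ is $C^{1,2}$ inside $\mathcal{C}$ and solves $W_t+\frac{\nu^2}{2}W_{xx}=0$ there, so It\^o's formula shows that $W(s\wedge\tau^*,X_{s\wedge\tau^*})$ has vanishing drift and is a local martingale; the linear growth of $W$ together with the Gaussian moment bounds upgrades it to a uniformly integrable martingale. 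Letting $s\uparrow T$ and using path-continuity with the closedness of $\mathcal{S}$ gives $W(\tau^*,X_{\tau^*})=U(X_{\tau^*})$, whence $W(t,x)=\mathbb{E}^{(t,x)}[W(\tau^*,X_{\tau^*})]=\mathbb{E}^{(t,x)}[U(X_{\tau^*})]$.

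The step I expect to demand the most care is this last one: promoting the local martingale to a genuine martingale and passing to the limit $s\uparrow T$ requires the interior smoothness of $W$ in $\mathcal{C}$ and a careful uniform-integrability estimate, and one must accommodate the kink of the obstacle $U$ at $x=0$. A cleaner alternative that bypasses the regularity discussion is to identify $W(s,X_s)$ with the Snell envelope of the payoff process $U(X_s)$ --- the smallest right-continuous supermartingale dominating it --- and to invoke the general theory of optimal stopping (as in \cite{pham09}), the integrability condition $\mathbb{E}^{(t,x)}[\sup_{t\le s\le T}|U(X_s)|]<\infty$ verified above guaranteeing that first entry into $\mathcal{S}$ is optimal.
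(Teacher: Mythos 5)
Your proposal is correct and follows essentially the same route as the paper, which simply cites the standard theory (\cite{oksendal-reikvam98}) for the viscosity characterisation and the first-entry optimal stopping rule, and notes that the linear-growth bound on $W$ (exactly the estimate you derive from the tail behaviour of $U$) is what secures uniqueness via comparison; your write-up just makes the standard machinery --- dynamic programming principle, sub/supersolution tests, comparison, and the Snell-envelope or It\^o argument for $\tau^*$ --- explicit. The only blemish is that you have swapped the usual test-function conventions (supersolutions are tested at local minima of $W-\varphi$, i.e.\ $\varphi$ touching from below, and subsolutions at local maxima), but this does not affect the substance of the argument.
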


\begin{proof}
	The relationship between the solution of an optimal stopping problem and the viscosity solution of the corresponding HJB variational inequality as well as the characterisation of the optimal stopping rule are standard - see for example \cite{oksendal-reikvam98}. Note that the techniques used in the proofs of Proposition \ref{prop:delta_suff} and \ref{prop:vis} can be adopted here to show that the value function $W(t,x)$ has at most a linear growth in $x$, which in turn confirms the uniqueness of the viscosity solution. 
\end{proof}

The below important result characterises the optimal stopping region in a more economically intuitive manner. In particular, the optimal stopping rule is a simple time-varying threshold strategy where the agent stops the process when its value is sufficiently high. 
\begin{prop}
	There exists a continuous and decreasing function $b:[0,T)\to (0,\infty)$ with $\lim_{t\uparrow T}b(t)=0$ such that the stopping set in \eqref{eq:setCS} admits a representation of
	\begin{align}
	\mathcal{S}=\{(t,x)\in [0,T)\times \mathbb{R}:x\geq b(t)\}\cup \left\{\{T\}\times\mathbb{R}\right\}.
	\label{eq:stopset}
	\end{align}
\label{prop:stopset}
\end{prop}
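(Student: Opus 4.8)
The plan is to study the sign structure of $g(t,x):=W(t,x)-U(x)\ge 0$, whose zero set is $\mathcal{S}$ and whose positivity set is the (open) continuation region $\mathcal{C}$. Two monotonicity properties drive everything. First, $W$ is nonincreasing in $t$: any stopping rule admissible on the horizon $[t_2,T]$ is also admissible on $[t_1,T]$ for $t_1<t_2$ (wait until $t_2$, then apply it), so $W(t_1,x)\ge W(t_2,x)$ and $W_t\le 0$. Second, inside the open set $\mathcal{C}$ the value solves the heat equation $W_t+\tfrac{\nu^2}{2}W_{xx}=0$ and, by interior parabolic regularity, is smooth there; combined with $W_t\le 0$ this gives $W_{xx}=-\tfrac{2}{\nu^2}W_t\ge 0$, i.e.\ \emph{$W(t,\cdot)$ is convex on every horizontal slice of $\mathcal{C}$}. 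This convexity, set against the concavity of $U$ on $(0,\infty)$, is the key to the threshold shape.

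The core step is to prove that $\mathcal{C}_t:=\{x:(t,x)\in\mathcal{C}\}=(-\infty,b(t))$. I would establish this from three facts. (a) $(-\infty,0)\subseteq\mathcal{C}_t$: for $x<0$, stopping at the first exit time of $(-\infty,0)$ capped at $T$ keeps the process in the strictly convex region, where $U(X_s)$ is a strict submartingale, so continuing strictly beats immediate stopping and $g>0$. (b) There is no bounded component of $\mathcal{C}_t$ inside $(0,\infty)$: on such an interval $(x_1,x_2)$ one has $W=U$ at both endpoints, so by convexity $W$ lies below the chord joining the endpoint values, while by concavity $U$ lies above the \emph{same} chord; hence $W\le U$ on $(x_1,x_2)$, contradicting $W>U$ there. (c) Large $x$ lie in $\mathcal{S}$: for $x$ large the process started at $x$ stays in the concave region $(0,\infty)$ with overwhelming probability over $[t,T]$, where Jensen forces $\mathbb{E}[U(X_\tau)]\le U(x)$, and the rare excursions below $0$ contribute a vanishing amount, so no rule beats $U(x)$. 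Facts (a)--(c) force $\mathcal{C}_t$ to be a single half-line $(-\infty,b(t))$ with $b(t)\in(0,\infty)$: (c) gives $b(t)<\infty$, and strict positivity $b(t)>0$ comes from exhibiting a profitable fair gamble out of $x=0$ --- a small up-move against a deep, low-probability down-move --- whose expected utility is positive because $\lim_{x\to-\infty}U(x)/x=0$ makes the downside ``cheap'' relative to the steep marginal utility just above $0$; this yields $W(t,0)>0=U(0)$, so $0\in\mathcal{C}_t$.

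It then remains to prove the three regularity statements about $b$. Monotonicity is immediate: since $W$ is nonincreasing in $t$, the set $\mathcal{S}=\{g=0\}$ grows with $t$, and as $\mathcal{S}_t=[b(t),\infty)$ this means $b$ is decreasing. For $\lim_{t\uparrow T}b(t)=0$ I would fix $x>0$ and show $(t,x)\in\mathcal{S}$ once $T-t$ is small: on $(0,\infty)$ the strict concavity of $U$ makes a short continuation from positive wealth strictly suboptimal (the probability of reaching the convex region $x<0$ in time $T-t$ is negligible), so the boundary cannot stay bounded away from $0$; combined with $\mathcal{S}\subseteq[0,\infty)$ this pins the terminal limit to $0$. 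Continuity of the decreasing function $b$ follows from a standard free-boundary argument: one-sided limits exist by monotonicity, and jumps are excluded using the openness of $\mathcal{C}$, the closedness of $\mathcal{S}$, and the interior smoothness of $W$ in $\mathcal{C}$.

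The main obstacle is the up-connectedness of $\mathcal{S}$, and the cleanest way through it is the convexity-versus-concavity chord contradiction in step (b); the one structural input it rests on --- that $W(t,\cdot)$ is convex on $\mathcal{C}$ --- must itself be justified carefully, namely by combining the time-monotonicity of $W$ with interior parabolic regularity rather than by assuming smoothness of the value function outright. I expect the genuinely delicate points to be the strict positivity $b(t)>0$ for every $t<T$ (this is where the S-shape and the sublinear left tail are essentially used, via the fair-gamble construction) and the continuity of $b$, both standard in spirit but requiring care. The terminal behaviour $b(T^-)=0$ and the monotonicity of $b$ are comparatively routine consequences of the concavity of $U$ on the gains domain and the time-monotonicity of $W$, respectively.
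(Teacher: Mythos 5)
Your overall architecture matches the paper's (time-monotonicity of $W$, interior regularity and $W_{xx}\geq 0$ on $\mathcal{C}$ as in Lemma \ref{lem:propW}, the exit-time/strict-convexity argument for $(-\infty,0)\subseteq\mathcal{C}_t$, monotonicity of $b$ from $W_t\le 0$), and your step (b) — the chord comparison pitting convexity of $W(t,\cdot)$ on $\mathcal{C}$ against concavity of $U$ on $(0,\infty)$ to exclude bounded components of $\mathcal{C}_t$ in the gains region — is a genuinely different and cleaner device than the paper's route, which instead proves that $W(t,x)-U(x)$ is nonincreasing in $x$ on $x\geq 0$ via a supersolution/comparison argument applied to the $x$-translate $F(t,x)=W(t,x+\beta)$.

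The genuine gap is your fact (c), that all sufficiently large $x$ lie in $\mathcal{S}_t$. Your justification is ``Jensen forces $\mathbb{E}[U(X_\tau)]\le U(x)$ and the rare excursions below $0$ contribute a vanishing amount.'' Jensen does not apply: $U$ is S-shaped, not concave on $\mathbb{R}$, and even the truncation $U(x^+)$ fails to be concave because of the kink at $0$; the ``rare excursions are negligible'' heuristic is precisely the reasoning the paper shows to be false in the static problem (digital payoffs with tiny loss probability attain $\sup U$), so it cannot be invoked without using the specific structure of stopped Brownian motion. Soft facts alone do not close this: a convex, increasing $W(t,\cdot)$ with linear growth satisfying $W>U$ on an unbounded interval $(x_1,\infty)$ (or on all of $\mathbb{R}$) is not self-contradictory, so without (c) you cannot rule out $\mathcal{S}_t=\emptyset$ or an unbounded continuation component, and the threshold representation collapses. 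The paper closes this exactly where your argument is missing content: the monotonicity of $W(t,\cdot)-U(\cdot)$ on $[0,\infty)$ implies $W_x\le U'$, so if $\mathcal{S}_t$ were empty the convex increasing function $W(t,\cdot)$ would have $W_x\to 0$ (using $\lim_{x\to+\infty}U'(x)=0$ from Assumption \ref{assump:special}), forcing $W(t,\cdot)\equiv\sup_s U(s)$, which contradicts the explicit Gaussian upper bound of Proposition \ref{prop:delta_suff} as $x\to-\infty$. You need either this comparison lemma or some substitute for it. A secondary, lesser issue: left-continuity of $b$ does not follow from ``openness of $\mathcal{C}$, closedness of $\mathcal{S}$ and interior smoothness'' alone (free boundaries can jump); the paper's proof uses smooth fit at $b(s)$ together with the quantitative estimate $W(s,\xi)-U(\xi)\geq\frac{C}{2}(\xi-b(s))^2$ coming from $W_{xx}\geq 0$ on $\mathcal{C}$ and the strict concavity $-U''\geq C>0$ on compacts of $(0,\infty)$; the same estimate is what delivers $\lim_{t\uparrow T}b(t)=0$.
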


\begin{proof}
See the appendix.
\end{proof}

Finally, we verify the equivalence of the portfolio optimisation problem \eqref{eq:valFun} and the optimal stopping problem \eqref{eq:OSTvalfun} under $\mu=r=0$. 

\begin{prop}
	Suppose $\mu=r=0$. For $i\in\{\text{VaR},\text{ES}\}$, let $V_{i}$ be the value function of the portfolio optimisation problem \eqref{eq:valFun} under $K=K_i$. Then
	\begin{align}
	V_{i}(t,x)=W(t,x; k^{i})
	\label{eq:v_eq_w}
	\end{align}
	where 
	\begin{align}
	k^{\text{VaR}}=-\frac{R}{ \Phi^{-1}(\alpha)\sqrt{\Delta}}>0,\qquad k^{\text{ES}}=\frac{R\alpha}{ \phi(\Phi^{-1}(\alpha))\sqrt{\Delta}}>0,
	\label{eq:ki}
	\end{align}
	and $W(t,x;\nu)$ is the value function of the optimal stopping problem \eqref{eq:OSTvalfun} with diffusion constant $\nu$. Moreover, an optimal portfolio strategy is
	\begin{align}
	\Pi^*_t=\frac{k^i}{\sigma} 1_{(X_t<b(t))}
	\label{eq:canoptpi}
	\end{align}
	with $b(\cdot)$ being the optimal stopping boundary function of the stopping set introduced in \eqref{eq:stopset} associated with problem \eqref{eq:OSTvalfun} (under the diffusion parameter $\nu= k_i$).
	\label{prop:equi}
\end{prop}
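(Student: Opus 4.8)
The plan is to squeeze $V_i$ between the stopping value $W(\cdot,\cdot;k^i)$ from both sides, after first collapsing the portfolio HJB into a clean shape. First I would check the arithmetic behind \eqref{eq:ki}: setting $\mu=r=0$ (i.e. taking the $r\to 0$ limit in Lemma \ref{lem:setA}) switches off the drift term and sends $\sqrt{(e^{2r\Delta}-1)/(2r)}\to\sqrt{\Delta}$, so that $K_{\text{VaR}}=[-k^{\text{VaR}}/\sigma,\,k^{\text{VaR}}/\sigma]$ and $K_{\text{ES}}=[-k^{\text{ES}}/\sigma,\,k^{\text{ES}}/\sigma]$ with $k^i$ exactly as in \eqref{eq:ki}. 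Feeding this symmetric interval and $\mu=r=0$ into the Hamiltonian \eqref{eq:ham} gives $H_i(x,p,M)=\sup_{|\pi|\le k^i/\sigma}\frac{\sigma^2}{2}M\pi^2=\frac{(k^i)^2}{2}\max(M,0)$, so by Proposition \ref{prop:vis} the value $V_i$ is the unique linear-growth viscosity solution of $-V_t-\frac{(k^i)^2}{2}(V_{xx})^+=0$ with $V(T,\cdot)=U$. The whole statement then reduces to showing that $W(\cdot,\cdot;k^i)$ solves this same equation.

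For the easy inequality $V_i\ge W$ I would simply run the explicit feedback control \eqref{eq:canoptpi}: under $\Pi^*_s=\frac{k^i}{\sigma}\mathbf 1_{(X_s<b(s))}$ the wealth obeys $dX_s=k^i\mathbf 1_{(X_s<b(s))}\,dB_s$, i.e. it is precisely the volatility-$k^i$ process run until the first time $\tau^*=\inf\{s\ge t:X_s\ge b(s)\}$ it reaches the boundary, and frozen thereafter. Because $b$ is decreasing (Proposition \ref{prop:stopset}), once $X_{\tau^*}=b(\tau^*)$ the frozen path remains in the stopping set, so $X_T=X_{\tau^*}$ and $\mathbb E^{(t,x)}[U(X_T)]=\mathbb E^{(t,x)}[U(X_{\tau^*})]=W(t,x;k^i)$ by optimality of $\tau^*$ for \eqref{eq:OSTvalfun}. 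Admissibility is immediate since $\Pi^*$ is bounded and adapted, giving $V_i\ge W$.

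For $V_i\le W$ I would prove that $W(s,X^\Pi_s)$ is a supermartingale along every admissible $\Pi$ (equivalently, that $W$ is a viscosity supersolution of the reduced HJB). Writing $\nu_s=\sigma\Pi_s$ with $|\nu_s|\le k^i$, It\^o's formula produces the drift $W_t+\frac{\nu_s^2}{2}W_{xx}$, and the variational inequality \eqref{eq:hjbstop} already delivers $W_t+\frac{(k^i)^2}{2}W_{xx}\le 0$ everywhere. On the stopping set $W=U$ is concave (since $b(t)>0$ and $U$ is concave on $(0,\infty)$), so $W_{xx}\le 0$ and the drift is $\le\frac{\nu_s^2}{2}W_{xx}\le 0$; the only thing left is the continuation region, where it suffices to establish
\[
W_{xx}\ge 0 \quad\text{on }\mathcal C=\{x<b(t)\}.
\]
Granted this convexity, $\nu_s^2\le(k^i)^2$ and $W_{xx}\ge 0$ give $W_t+\frac{\nu_s^2}{2}W_{xx}\le W_t+\frac{(k^i)^2}{2}W_{xx}\le 0$, so (after a routine localisation) $W(s,X^\Pi_s)$ is a supermartingale and $\mathbb E^{(t,x)}[U(X_T)]=\mathbb E^{(t,x)}[W(T,X_T)]\le W(t,x;k^i)$ for every admissible $\Pi$, i.e. $V_i\le W$. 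Together with the previous paragraph this yields \eqref{eq:v_eq_w} and shows \eqref{eq:canoptpi} is optimal.

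The hard part is the spatial convexity of $W$ on $\mathcal C$, and this is where I expect to spend the real effort. My approach would be a parabolic maximum principle for $p:=W_{xx}$, which solves the same backward heat equation $p_t+\frac{(k^i)^2}{2}p_{xx}=0$ as $W$ inside $\mathcal C$. I would read off its data on the parabolic boundary: on the free boundary $x=b(t)$, smooth fit gives $W_x(t,b(t))=U'(b(t))$, and differentiating $W(t,b(t))=U(b(t))$ along the curve forces $W_t(t,b(t))=0$, whence $W_{xx}(t,b(t)^-)=-\frac{2}{(k^i)^2}W_t=0$; on the terminal slice $p(T,x)=U''(x)\ge 0$ for $x<0$; and as $x\to-\infty$ the convexity of $U$ on the negative axis forces $\liminf p\ge 0$. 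Reversing time to a forward heat equation and applying the minimum principle then gives $p\ge 0$ throughout $\mathcal C$. The two genuine obstacles are (i) upgrading $W$ from a viscosity solution to enough classical regularity ($C^{1,2}$ in $\mathcal C$ with $C^1$ smooth fit across $b$) to run this argument, which the strict convexity/concavity in Assumption \ref{assump:special} should supply via standard one-dimensional optimal-stopping theory, and (ii) the maximum principle on the unbounded, time-dependent domain $\mathcal C$, which needs a Phragm\'en--Lindel\"of-type growth control at $x\to-\infty$ justified by the linear growth of $W$. If one prefers to sidestep classical regularity altogether, the same conclusion can be reached purely in viscosity terms by using the convexity together with \eqref{eq:hjbstop} to show directly that $W$ is both a viscosity sub- and supersolution of $-V_t-\frac{(k^i)^2}{2}(V_{xx})^+=0$, and then quoting the uniqueness in Proposition \ref{prop:vis}.
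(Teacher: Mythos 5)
Your proposal is correct in substance but takes a genuinely different route from the paper. The paper does not run a verification/supermartingale argument at all: it shows directly that $W(\cdot,\cdot;k^i)$ is a viscosity solution of the portfolio HJB equation \eqref{eq:hjb} (on $\mathcal{C}$ the Hamiltonian reduces to $\frac{(k^i)^2}{2}W_{xx}$ because $W_{xx}\geq 0$ there; on $\mathcal{S}$ one has $W=U$, $W_t=0$ and $W_{xx}\leq 0$, so the Hamiltonian vanishes), and then invokes the uniqueness of linear-growth viscosity solutions from Proposition \ref{prop:vis} to conclude $V_i=W$; the optimality of \eqref{eq:canoptpi} is then checked exactly as in your first inequality, by observing that the controlled wealth is the optimally stopped process in distribution. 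This is precisely the ``sidestep'' you mention in your last sentence, and it is the shorter path because it outsources all the delicate work (It\^o across the free boundary, localisation, Phragm\'en--Lindel\"of issues) to the comparison principle already established. Your supermartingale route buys a self-contained verification theorem but at the cost of needing a change-of-variable formula across the curve $b$ where $W$ is only $C^1$, which ``routine localisation'' does not quite cover.

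The one place where your plan is genuinely heavier than necessary, and also the shakiest, is the convexity $W_{xx}\geq 0$ on $\mathcal{C}$, which you flag as the hard part and attack via a parabolic maximum principle for $p=W_{xx}$. In the paper this is Lemma \ref{lem:propW} and it is a one-line consequence of interior regularity: on $\mathcal{C}$ the function $W$ is $C^{1,2}$ and satisfies $-W_t-\frac{\nu^2}{2}W_{xx}=0$ classically, and since $W$ is decreasing in $t$ (less time to stop cannot help), $W_{xx}=-\frac{2}{\nu^2}W_t\geq 0$. Your boundary-data computation for $p$ also has a gap: deriving $W_{xx}(t,b(t)^-)=0$ by differentiating $W(t,b(t))=U(b(t))$ along the free boundary presupposes that $b$ is differentiable, which is not established (the paper only proves continuity and monotonicity of $b$), and the terminal/lateral data for $p$ on the time-dependent domain $\mathcal{C}$ would need more care near $t=T$ where $b(t)\downarrow 0$. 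None of this is fatal, since the convexity can be obtained the easy way, but as written that step of your argument would not go through.
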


\begin{proof}
	When $\mu=r=0$,  Lemma \ref{lem:setA} implies that the set $K_i$ simplifies to $K_i=[-\frac{k^i}{\sigma},\frac{k^i}{\sigma}]$ where the $k^i$'s are defined in \eqref{eq:ki}. Moreover, the Hamiltonian in \eqref{eq:hjb} becomes
	\begin{align*}
	H_i(p,M)=H_i(M)=\sup_{\pi\in[-\frac{k^i}{\sigma},\frac{k^i}{\sigma}]}\frac{\sigma^2}{2}M \pi^2=
	\begin{cases}
	\frac{ (k^i)^2}{2}M,& M\geq 0; \\
	0,& M< 0.
	\end{cases}
	\end{align*}
	To verify \eqref{eq:v_eq_w}, it is sufficient to show that $W(t,x;\nu)$, the solution to \eqref{eq:OSTvalfun}, is also a solution to \eqref{eq:hjb} under the choice of $\nu=k^i$. For $(t,x)\in\mathcal{C}$, we have $W_{xx}\geq 0$ by Lemma \ref{lem:propW} in the Appendix. Then $W_t+H(W_{xx})=W_{t}+\frac{ (k^i)^2}{2}W=W_{t}+\frac{\nu^2}{2}W=0$. For $(t,x)\in \mathcal{S}$, we have $W(t,x)=U(x)$ and $W_t+\frac{ (k^i)^2}{2}W_{xx}=W_t+\frac{\nu^2}{2}W_{xx}\leq 0$ which gives $W_t=0$ and $W_{xx}\leq 0$. Then $W_t+H(W_{xx})=0+0=0$. Hence $W$ solves \eqref{eq:hjb}.
	
	The candidate strategy $\Pi^*$ defined by \eqref{eq:canoptpi} is clearly in the admissible set $\mathcal{A}(K_i)$. To verify its optimality, one can compute $\mathbb{E}^{(t,x)}[U(X_T^{\Pi^*})]$ and show that it attains the same value as $W(t,x)$. But it is clear since
	\begin{align*}
	X^{\Pi}_T&=X_t+\int_t^T \sigma \Pi^*_{s} dB_s=X_t+\int_t^T  k^i 1_{(X_s<b(s))} dB_s \\
	&=X_t+\int_t^T  k^i 1_{(\tau^* > s)} dB_s\\
	&=X_t+\nu\int_t^{\tau^*} dB_s=X_t+\nu (B_{\tau^*}-B_t)
	\end{align*}
	where the portfolio process coincides (in distribution) with the optimally stopped process in problem \eqref{eq:OSTvalfun}.
\end{proof}

Figure \ref{fig:region} gives a stylised plot of the optimal portfolio strategy. 
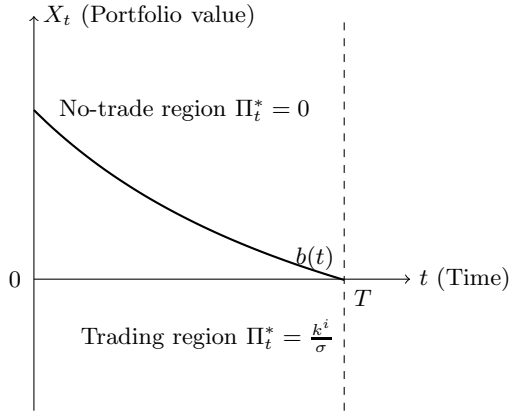
\begin{figure}[htbp]
	\center
	\begin{tikzpicture}[scale=5]
	\draw[->] (0,0.55) -- (1,0.55) node [right] {\small $t$ (Time)};
	\draw[->] (0,0.2) -- (0,1.25) node[right] {\small $X_{t}$ (Portfolio value)};
	\draw[dashed] (0.825,0.2) -- (0.825,1.25);
	\draw[thick][scale=1,domain=0:0.825,smooth,variable=\x] plot ({\x},{1/(\x+1}) node [above left] {\small $b(t)$};
	\node at (0.875,0.5) {\small $T$};
	\node at (-0.05,0.55) {\small $0$};
	\draw (0.1,0.4) node [right] {\small Trading region $\Pi^*_t=\frac{k^i}{\sigma}$};
	\node at (0.4,1) {\small No-trade region $\Pi^*_t=0$};
	\end{tikzpicture}
	\caption{A graphical illustration of the optimal portfolio strategy under the special case $\mu=r=0$. When the portfolio value is low, the agent invests the maximum possible amount in the risky asset by taking $\Pi_t^*=k^i/\sigma$, whereas when the portfolio value is high the agent takes all the risk off and sets $\Pi_t^*=0$. The critical boundary between risk-on and risk-off is given by a non-negative continuous, and decreasing function $b(t)$.}
	\label{fig:region}
\end{figure}

\subsection{Comparative statics}

Some comparative statics can be established to shed light on the policy implications of the dynamic risk constraint. We begin by offering a useful lemma.

\begin{lemma}
	Consider problem \eqref{eq:OSTvalfun} and let $b(t;\nu)$ be the optimal stopping boundary defined in \eqref{eq:stopset} under a fixed diffusion constant $\nu$. Then $b(t;\nu)$ is increasing in $\nu$.
	\label{lem:b_vol}
\end{lemma}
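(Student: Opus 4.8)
The plan is to collapse the dependence on $\nu$ onto a single ``volatility budget'' parameter by Brownian scaling, and then to exploit the elementary fact that an optimal-stopping value is monotone in its time horizon. The point is that no convexity/concavity argument is needed: the option to stop makes the value monotone in the budget, and this transfers directly to the threshold.

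First I would rewrite the value function in terms of a standard Brownian motion. For fixed $t<T$, writing $Y$ for a standard Brownian motion started at $0$ and setting $\rho=\tau-t$, we have
\[
W(t,x;\nu)=\sup_{\rho\in\mathcal{T}_{0,T-t}}\mathbb{E}\big[U(x+\nu Y_\rho)\big].
\]
The deterministic, continuous, strictly increasing clock change $Z_s:=\nu\, Y_{s/\nu^2}$ produces another standard Brownian motion, and it sets up a bijection between stopping times $\rho\le T-t$ (for $Y$) and stopping times $\sigma:=\nu^2\rho\le \nu^2(T-t)$ (for the time-changed filtration of $Z$), under which $\nu Y_\rho=Z_\sigma$. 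Hence
\[
W(t,x;\nu)=G\big(x,\nu^2(T-t)\big),\qquad G(x,s):=\sup_{\sigma\in\mathcal{T}_{0,s}}\mathbb{E}\big[U(x+Z_\sigma)\big],
\]
so the dependence on $t$ and $\nu$ enters only through the budget $s=\nu^2(T-t)$.

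Next I would observe that $G(x,s)$ is nondecreasing in $s$, for the trivial reason that enlarging the horizon only enlarges the family of admissible stopping times over which the supremum is taken; thus $G(x,s_1)\le G(x,s_2)$ whenever $s_1\le s_2$. Since $\nu\mapsto\nu^2(T-t)$ is increasing for each fixed $t<T$, it follows that $W(t,x;\nu)$ is nondecreasing in $\nu$. Finally I would transfer this to the boundary using the single-threshold structure of Proposition \ref{prop:stopset}, which guarantees that the continuation set at parameter $\nu$ is exactly $\{(t,x):x<b(t;\nu)\}$ for $t<T$. Indeed, if $\nu_1<\nu_2$ and $x<b(t;\nu_1)$, then $U(x)<W(t,x;\nu_1)\le W(t,x;\nu_2)$, so $(t,x)$ lies in the continuation set at $\nu_2$, i.e. $x<b(t;\nu_2)$. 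This gives the inclusion $(-\infty,b(t;\nu_1))\subseteq(-\infty,b(t;\nu_2))$ and hence $b(t;\nu_1)\le b(t;\nu_2)$, which is the assertion.

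The only genuinely delicate step is the time-change bijection of stopping times and its compatibility with the (right-continuous, augmented) filtration of $Z$; because the clock change is deterministic, continuous and strictly increasing, this is standard and requires no new estimates. I would flag that monotonicity of $W$ in $\nu$ might at first seem to clash with concavity of $U$ over gains, where extra volatility is locally harmful; the resolution, made transparent by the reduction to $G$, is that the presence of the stopping option makes the value globally monotone in the budget, so the nesting of stopping-time classes does all the work.
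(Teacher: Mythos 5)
Your proof is correct, but it takes a genuinely different route from the paper's. The paper compares the two value functions $W^{(1)},W^{(2)}$ directly via a PDE argument: it shows that $\bar W:=W^{(2)}-W^{(1)}$ satisfies $\mathcal{G}_2\bar W\geq \tfrac{\nu_2^2-\nu_1^2}{2}W^{(1)}_{xx}\geq 0$ on the continuation region of $W^{(1)}$ --- using the convexity $W^{(1)}_{xx}\geq 0$ there, which is itself established in Lemma \ref{lem:propW} from $W_t\leq 0$ --- and then invokes the maximum principle together with the boundary data $\bar W\geq 0$ on $\{x\geq b_1(t)\}$ and $\bar W(T,\cdot)=0$ to conclude $W^{(2)}\geq W^{(1)}$. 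You instead exploit the space--time homogeneity of the driftless problem: the deterministic clock change $Z_s=\nu Y_{s/\nu^2}$ collapses the $(\nu,t)$-dependence into the single budget $s=\nu^2(T-t)$, and monotonicity of the value in the budget is then immediate from the nesting $\mathcal{T}_{0,s_1}\subseteq\mathcal{T}_{0,s_2}$, with no convexity, regularity, or comparison-principle input at all. The final transfer from ordering of value functions to ordering of boundaries via the single-threshold structure of Proposition \ref{prop:stopset} is the same in both arguments. What your approach buys is economy and a bonus: it is purely probabilistic and elementary, and it proves Property 1 of the subsequent comparative-statics proposition ($b$ increasing in $T$) in the same stroke, since $T$ enters only through the same budget. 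What it gives up is robustness: the scaling reduction is tied to the case $\mu=r=0$ and constant $\nu$, whereas the paper's supersolution/maximum-principle argument would survive a drift or state-dependent coefficients. The one technical point you rightly flag --- compatibility of the deterministic time change with the (augmented) filtration and the identification of the stopping-time classes --- is standard and does not create a gap.
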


\begin{proof}
	Let $W^{(j)}$ be the value function and $b_j(t)$ be the corresponding optimal stopping boundary associated with problem \eqref{eq:OSTvalfun} under parameter $\nu_j$. Similarly, define
	\begin{align*}
	\mathcal{G}_j f:=-f_t-\frac{\nu_j^2}{2}f_{xx}.
	\end{align*}
	Fix $\nu_2>\nu_1$ and define $\bar{W}:=W^{(2)}-W^{(1)}$.
	
	On $x\geq b_1(t)$, we have $\bar{W}(t,x)=W^{(2)}(t,x)-W^{(1)}(t,x)=W^{(2)}(t,x)-U(x)\geq 0$. On $x<b_1(t)$, we have $\mathcal{G}_1 W^{(1)}=0$ and $\mathcal{G}_2 W^{(2)} \geq 0$ (note that the latter might need to be understood in a viscosity sense). Hence
	\begin{align*}
	0\leq \mathcal{G}_2 W^{(2)} - \mathcal{G}_1 W^{(1)} &= -W^{(2)}_t+W^{(1)}_t-\frac{\nu_2^2}{2}W^{(1)}_{xx}+\frac{\nu_1^2 }{2}W^{(1)}_{xx}\\
	&=-\frac{\partial}{\partial t}(W^{(2)}-W^{(1)})-\frac{\nu_2^2 }{2}\frac{\partial^2}{\partial x^2}(W^{(2)}-W^{(1)})-\frac{\nu_2^2-\nu_1^2}{2}W^{(1)}_{xx}\\
	&=\mathcal{G}_2 \bar{W}-\frac{\nu_2^2-\nu_1^2}{2}W^{(1)}_{xx}
	\end{align*}
	and therefore
	\begin{align*}
	\mathcal{G}_2 \bar{W}\geq \frac{\nu_2^2-\nu_1^2}{2}W^{(1)}_{xx}\geq 0
	\end{align*}
	for all $x<b_1(t)$ since $W_{xx}^{(1)}\geq 0$ on its continuation region by Lemma \ref{lem:propW} in the Appendix. Moreover, $\bar{W}(t,x)\geq 0$ on $x\geq b_1(t)$ and $\bar{W}(T,x)=U(x)-U(x)=0$, it follows from maximum principle that $W^{(2)}-W^{(1)}=\bar{W}\geq 0$ for all $x<b_1(t)$. Hence $W^{(2)}(t,x)\geq W^{(1)}(t,x)>U(x)$ on $x<b_1(t)$ from which we can conclude $b_2(t)\geq b_1(t)$, i.e. $b(t;\nu)$ is increasing in $\nu$.
\end{proof}

\begin{prop}
	In the special case of $\mu=r=0$, denote by $b(t;\theta)$ the trading boundary associated with the optimal strategy of the VaR/ES-constrained problem \eqref{eq:valFun} introduced in Proposition \ref{prop:equi} under a particular model parameter $\theta$. For $t$ being fixed, we have the following:
	\begin{enumerate}
		\item $b(t; T)$ is increasing in the trading horizon $T$;
		\item $b(t; R)$ is increasing in the risk limit level $R$;
		\item $b(t; \alpha)$ is increasing in the significance level of the VaR/ES measure $\alpha$;
		\item $b(t; \Delta)$ is decreasing in the risk evaluation window $\Delta$;
		\item $b(t; \sigma)$ does not depend on $\sigma$.
	\end{enumerate} 
\end{prop}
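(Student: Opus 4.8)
The plan is to treat the horizon parameter $T$ separately from the four ``risk-management'' parameters $R,\alpha,\Delta,\sigma$. For the latter four, Proposition \ref{prop:equi} tells us that the trading boundary $b(t;\theta)$ is exactly the optimal-stopping boundary of problem \eqref{eq:OSTvalfun} with diffusion constant $\nu=k^i$, where $k^i$ is given explicitly by \eqref{eq:ki}; crucially, the horizon of the stopping problem is left untouched when we perturb any of $R,\alpha,\Delta,\sigma$. Hence the entire dependence of $b$ on each of these parameters is funnelled through the scalar $\nu=k^i$, and Lemma \ref{lem:b_vol} (which asserts that $b(t;\nu)$ is increasing in $\nu$) reduces each claim to a one-line monotonicity check on the formula for $k^i$.

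Reading off \eqref{eq:ki}, neither $k^{\text{VaR}}$ nor $k^{\text{ES}}$ contains $\sigma$, which gives item (5) immediately. Both are linear and increasing in $R$, giving item (2), and both carry a factor $\Delta^{-1/2}$, so they decrease in $\Delta$; composing with Lemma \ref{lem:b_vol} yields item (4). For item (3) the VaR case is equally direct: since $\alpha<\tfrac{1}{2}$ we have $\Phi^{-1}(\alpha)<0$, and $|\Phi^{-1}(\alpha)|$ decreases as $\alpha$ increases, so $k^{\text{VaR}}=R/(|\Phi^{-1}(\alpha)|\sqrt{\Delta})$ increases in $\alpha$.

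The only genuinely computational point is the ES case of item (3). Writing $z=\Phi^{-1}(\alpha)<0$, we have $k^{\text{ES}}\propto \Phi(z)/\phi(z)$, and since $z$ is increasing in $\alpha$ it suffices to show that $g(z):=\Phi(z)/\phi(z)$ is increasing for $z<0$. Using $\phi'(z)=-z\phi(z)$ one finds $g'(z)=1+z\,g(z)$, so the claim is equivalent to $g(z)<1/|z|$ for $z<0$; substituting $z=-w$ this is exactly the classical Mills-ratio bound $\frac{1-\Phi(w)}{\phi(w)}<\frac{1}{w}$ for $w>0$. I expect this Mills-ratio estimate to be the main (and essentially the only) obstacle in items (2)--(5), though it is standard.

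Finally, item (1) requires a different argument because $T$ is the horizon of the stopping problem itself rather than a parameter of the diffusion. Here I would fix $\nu=k^i$ and compare horizons $T_2>T_1$: since $\mathcal{T}_{t,T_1}\subseteq\mathcal{T}_{t,T_2}$, taking the supremum in \eqref{eq:OSTvalfun} over a larger set gives $W^{T_2}(t,x)\geq W^{T_1}(t,x)$ for every $(t,x)$. Consequently the continuation set expands, $\mathcal{C}^{T_1}\subseteq\mathcal{C}^{T_2}$ (if $W^{T_1}(t,x)>U(x)$ then a fortiori $W^{T_2}(t,x)>U(x)$), and because the continuation region is precisely $\{x<b^{T}(t)\}$ by \eqref{eq:stopset}, comparing the $t$-slices forces $b(t;T_1)\leq b(t;T_2)$ for every $t<T_1$; that is, $b(t;T)$ is increasing in $T$.
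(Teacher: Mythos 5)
Your proof is correct and, for items (2)--(5), follows exactly the paper's route: all dependence on $R,\alpha,\Delta,\sigma$ is funnelled through the diffusion constant $\nu=k^i$ via Proposition \ref{prop:equi}, and Lemma \ref{lem:b_vol} reduces each claim to a monotonicity check on the explicit formula \eqref{eq:ki}. You go beyond the paper in two small but worthwhile ways. First, the paper merely asserts that $k^{\text{ES}}$ is increasing in $\alpha$; you actually verify it by writing $k^{\text{ES}}\propto \Phi(z)/\phi(z)$ with $z=\Phi^{-1}(\alpha)<0$ and reducing the claim, via $g'(z)=1+zg(z)$, to the Mills-ratio bound $\frac{1-\Phi(w)}{\phi(w)}<\frac{1}{w}$ for $w>0$ --- this is precisely the computation the paper leaves implicit. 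Second, for item (1) the paper appeals to the fact that $b(t)$ is decreasing in $t$, which delivers the conclusion only because the stopping problem is time-homogeneous so that $b(t;T)$ depends on $(t,T)$ through $T-t$ alone; your argument instead compares horizons directly through the inclusion $\mathcal{T}_{t,T_1}\subseteq\mathcal{T}_{t,T_2}$, deduces $W^{T_2}\geq W^{T_1}$ and hence an expanding continuation region, and reads off $b(t;T_1)\leq b(t;T_2)$ from \eqref{eq:stopset}. This is marginally more robust since it does not invoke time-homogeneity. No gaps.
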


\begin{proof}
	From Proposition \ref{prop:equi}, the trading boundary of the optimal strategy is given by $b(t;k^i)$ which can be characterised by the optimal stopping boundary of problem \eqref{eq:OSTvalfun} with parameter $\nu=k^i$.
	
	Property 1 can simply be inferred from the fact that $b(t)$ is decreasing in $t$, and Property 2 to 5 immediately follow from Lemma \ref{lem:b_vol} by observing that both $k^{\text{VaR}}=-\frac{R}{\Phi^{-1}(\alpha)\sqrt{\Delta}}$ and $k^{\text{ES}}=\frac{R\alpha}{ \phi(\Phi^{-1}(\alpha))\sqrt{\Delta}}$ are increasing in $R$ and $\alpha$ (for $\alpha<0.5$), decreasing in $\Delta$, and does not depend on $\sigma$.
\end{proof}

Recall the optimal strategy is in form of $\Pi_t^*=\frac{k^i}{\sigma}1_{(X_t<b(t))}$. Hence $k^i$ governs the amount of investment given that the trader is in the trading region, and the location of $b(t)$ reflects how frequent the trader will be trading. Higher the value of $b(t)$, larger the regime of portfolio value under which the trader takes the most extreme risk exposure. Imposing dynamic risk constrains can curb such behaviours. As a first order effect, a strict risk limit (low $R$ or $\alpha$) reduces $k^{i}$ which limits the position value of risky asset investment. This restricts the volatility of the portfolio return and thus the trader might find it less attractive to gamble despite the non-concavity of his utility function. As a result, it also leads to a shrunk region of trading, i.e. $b(t)$ is lowered. Alternatively, the trader will trade less often if the trading horizon $T$ is reduced. This could potentially be achieved in practice by shortening the performance evaluation horizon.

\section{Derivatives trading under dynamic risk constraints}
\label{sect:derivatives}

In the context of portfolio optimisation under a complete market, it is typically not important to distinguish a ``delta-one'' trader (who is constrained to trade only in
the underlying stock and a risk-free account) and a derivatives trader (who can purchase any payoff structure contingent on the underlying stock price). It is because market completeness implies that perfect replication of any arbitrary claim is feasible and hence derivatives securities are redundant. This insight is exploited heavily to facilitate the martingale duality method where a dynamic portfolio selection problem is converted into a static problem of optimal payoff design.

Our main results in Section \ref{sect:mainresults} and \ref{sect:zerodrift} apply to a delta-one trader, in which case the expected shortfall of the portfolio is determined by the delta of the portfolio and this is a key ingredient in our calculation. 

However, the results will change drastically if the trader has access to the derivatives market. A trader with limited liability who is allowed to purchase arbitrary derivative
securities at the Black-Scholes price will be able to achieve arbitrarily high expected utilities under any expected shortfall constraint by pursuing a martingale type strategy. The essential idea is 
to use Theorem 4.1 of \cite{armstrong-brigo19} to find a derivative which comfortably meets the expected shortfall constraint and provides the desired utility. If at some future point the market moves so that the expected shortfall constraint hits the limit, then the trader may
apply Theorem 4.1 \cite{armstrong-brigo19} to find a new derivative which still yields the desired expected utility and which ensures that the constraint again comfortably met. It is possible to construct a strategy
so that with probability $1$, the trader will only need to rebalance their portfolio in this way a finite
number of times. We give a proof of this in Appendix \ref{appendix:martingale}.

Why is a dynamic risk constraint effective against a delta-one trader but not a derivatives trader? It is because the replication of large quantities of out-of-money digital options will involve trading a massive notional of the underling stock in the bad state of the world, which the delta-one trader understands ex-ante will not be feasible under a given dynamic risk constraint. In contrast, the feasibility of a derivative position only depends on the current statistical profile of the payoff. The derivatives trader can therefore exploit the blindspot of a risk measure to ensure the massive tail-risk is not detected. Finally, the possibility to roll-over a derivative position allows the risk constraint to be satisfied throughout the entire trading horizon.

One might ask what alternative types of risk limits would be effective against
such a trader. Expected utility constraints give one possible answer.
For example, one can choose a concave increasing function $U_m$ of the form
\[
U_m(x)=-(-x)^\gamma {\mathbbm 1}_{x\leq 0}
\]
for $\gamma \in(1, \infty)$
and require that at each time
\[
\mathbb{E}[U_m(X^{(t)}_{t+\Delta})]\geq R
\]
where $X^{(t)}_s$ is the time-$s$ value of the derivatives portfolio held
by the trader at time $t$ and $R \in (-\infty,0)$ is a chosen risk limit. To
see that such a constraint would be effective, first note that there would be a minimum wealth at time $T$ needed to achieve such a utility constraint. This would implies that the trading strategy must achieve a minimum expected $u_M$ at time $T$ and one may then apply Theorem 5.3 of
\cite{armstrong-brigo19}.

\section{Concluding remarks}
\label{sect:conclude}

While VaR and ES are widely adopted by practitioners, the impact of such risk constraints on traders' behaviours are not necessarily well understood. This paper addresses the negative result of \cite{armstrong-brigo19} that a static VaR/ES measure does not work at all on a tail-risk-seeking trader. Our key result highlights that dynamic monitoring of the trading positions is crucial. Continuous re-evaluation of portfolio exposure demands traders to respect a delta notional limit at all time. This alone is sufficient to discourage excessive risk taking during market distress which is naturally attractive to a tail-risk-seeking trader. 

However, the dangerous combination of tail-risk-seeking preference and derivatives trading can pose challenges to risk management. The possibility to rebalance a derivative position allows the trader to pursue a martingale strategy where the trading losses and risk limit breaches can be indefinitely deferred. As the possible alternatives to statistical measures like VaR or ES, utility-based risk measures or other scenario-based assessments such as stress testing might be the superior tools for risk managing derivatives traders. It will be of both theoretical and practical interests to further explore the desirable features of an effective risk control mechanism which performs well beyond delta-one trading.  

\bibliographystyle{apalike}
\bibliography{ref}

\appendix

\section{Proofs}
\label{app:proof}

We first provide some prior properties of the value function \eqref{eq:OSTvalfun} in the following lemma.

\begin{lemma}
	The value function \eqref{eq:OSTvalfun} has the following properties:
	\begin{enumerate}
		\item $W(t,x)$ is continuous in $t$ and $x$.
		\item $W(t,x)$ is decreasing in $t$ and is increasing in $x$;
		\item $W(t,x)\in C^{1,2}(\mathcal{C})$ with $W_{xx}(t,x)\geq 0$ for all $(t,x)\in \mathcal{C}$ where $\mathcal{C}$ is defined in \eqref{eq:setCS}.
	\end{enumerate}
	\label{lem:propW}
\end{lemma}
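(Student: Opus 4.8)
The plan is to establish the three assertions in the order (2), (1), (3), since the monotonicity statements are easiest and the sign of $W_{xx}$ will ultimately be read off from the PDE together with the monotonicity in $t$. First I would exploit that the driver is a time-homogeneous, driftless Brownian motion: writing $X^{(t,x)}_s = x + \nu(B_s-B_t)$ and using time-homogeneity gives $W(t,x)=\sup_{\tau\le T-t}\mathbb{E}[U(x+\nu B_\tau)]$, the supremum running over $[0,T-t]$-valued stopping times. Monotonicity in $x$ is then a pathwise coupling: for $x_1\le x_2$ the same Brownian path yields $x_1+\nu B_\tau\le x_2+\nu B_\tau$, and since $U$ is increasing, $\mathbb{E}[U(x_1+\nu B_\tau)]\le \mathbb{E}[U(x_2+\nu B_\tau)]$ for every $\tau$; taking suprema gives $W(t,x_1)\le W(t,x_2)$. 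Monotonicity in $t$ follows from set inclusion: a stopping time valued in $[0,T-t_2]$ is valued in $[0,T-t_1]$ whenever $t_2\ge t_1$, so enlarging the horizon $T-t$ enlarges the admissible family, and the supremum can only increase as $t$ decreases; hence $W$ is decreasing in $t$.

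For continuity (assertion (1)) I would first record a modulus of continuity for $U$ itself. Since $U$ is concave on $[0,\infty)$ with $U(0)=0$, it is subadditive there, so $0\le U(y)-U(x)\le U(y-x)$ for $0\le x\le y$; the analogous statement on $(-\infty,0]$ follows by applying this to $V(x):=-U(-x)$, which is concave, increasing and vanishes at the origin. Thus $U$ is uniformly continuous on $\mathbb{R}$ with some (without loss of generality concave) modulus $\omega_U$. Uniform continuity in $x$, uniform in $t$, is then immediate:
\begin{align*}
|W(t,x_1)-W(t,x_2)|\le \sup_{\tau}\mathbb{E}\big[\,|U(x_1+\nu B_\tau)-U(x_2+\nu B_\tau)|\,\big]\le \omega_U(|x_1-x_2|).
\end{align*}
For continuity in $t$ I would combine monotonicity with a truncation argument: given $t_2<t_1$ and an $\varepsilon$-optimal $\tau^*$ for $W(t_2,x)$, the truncation $\tau^*\wedge(T-t_1)$ is admissible for $W(t_1,x)$, so
\begin{align*}
0\le W(t_2,x)-W(t_1,x)\le \varepsilon+\mathbb{E}\big[\,\omega_U\big(\nu|B_{\tau^*}-B_{\tau^*\wedge(T-t_1)}|\big)\,\big].
\end{align*}
Optional stopping gives $\mathbb{E}|B_{\tau^*}-B_{\tau^*\wedge(T-t_1)}|^2=\mathbb{E}[(\tau^*-(T-t_1))^{+}]\le t_1-t_2$, so Jensen's inequality bounds the right-hand side by $\varepsilon+\omega_U(\nu\sqrt{t_1-t_2})$, which yields continuity in $t$ and, together with the $x$-estimate, joint continuity.

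For assertion (3) I would work inside the open continuation set $\mathcal{C}=\{W>U\}$. There the variational inequality \eqref{eq:hjbstop} reduces to the linear heat equation $-W_t-\tfrac{\nu^2}{2}W_{xx}=0$ in the viscosity sense, and by interior regularity for this constant-coefficient uniformly parabolic equation the viscosity solution is classical (indeed smooth), so $W\in C^{1,2}(\mathcal{C})$. The sign of $W_{xx}$ is then the clean final step: on $\mathcal{C}$ the equation reads $W_{xx}=-\tfrac{2}{\nu^2}W_t$, and since $W$ is decreasing in $t$ by assertion (2) and differentiable there, $W_t\le 0$, whence $W_{xx}\ge 0$ throughout $\mathcal{C}$. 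The step I expect to demand the most care is this regularity upgrade — ensuring the viscosity solution of the obstacle problem is continuous (so that $\mathcal{C}$ is genuinely open) and promotes to a classical solution of the heat equation on $\mathcal{C}$ — rather than the sign itself. It is worth noting that a direct probabilistic convexity argument would fail here, because the S-shape of $U$ destroys any Jensen-type inequality on the concave part of the obstacle; routing the nonnegativity of $W_{xx}$ through the heat equation and the $t$-monotonicity is precisely what sidesteps that difficulty.
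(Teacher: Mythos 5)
Your proof is correct, and for the substantive part of the lemma—Property 3—it takes essentially the same route as the paper: both arguments observe that on the continuation set the variational inequality degenerates to the heat equation $-W_t-\tfrac{\nu^2}{2}W_{xx}=0$, upgrade the viscosity solution to a classical one by interior parabolic regularity (the paper implements this via a local Dirichlet problem on a bounded subdomain of $\mathcal{C}$ plus uniqueness of viscosity solutions, which is the concrete version of the "regularity upgrade" you flag as the delicate step), and then read off $W_{xx}=-\tfrac{2}{\nu^2}W_t\geq 0$ from the monotonicity in $t$. Where you genuinely diverge is in Properties 1 and 2: the paper dispatches continuity with a one-line appeal to the comparison principle for viscosity solutions and monotonicity as "easily inferred", whereas you give a self-contained probabilistic argument—pathwise coupling and horizon inclusion for monotonicity, and a quantitative modulus of continuity obtained from subadditivity of the concave branch of $U$ (and of $x\mapsto -U(-x)$ on the convex branch), combined with stopping-time truncation, optional stopping and Jensen for continuity in $t$. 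Your route is more elementary and yields an explicit modulus $\omega_U(\nu\sqrt{t_1-t_2})$, at the cost of a small amount of bookkeeping (e.g.\ patching the modulus across $x=0$ and checking that a concave modulus exists); the paper's route is shorter but leans on PDE machinery that it does not spell out. Your closing observation that a direct Jensen-type convexity argument is unavailable because of the S-shape of $U$, so that the sign of $W_{xx}$ must be routed through the equation and the $t$-monotonicity, is exactly the point of the paper's argument as well.
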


\begin{proof}
	Property 1 is due to the standard comparison principle of viscosity solution. Property 2 can be easily inferred from the structure of the optimal stopping problem. Here we will prove Property 3.
	
	Fix a bounded open domain $\mathcal{O}$ in $\mathcal{C}$ and consider a boundary value problem
	\begin{align}
	\begin{cases}
	\mathcal{G}f:=-f_t-\frac{\nu^2}{2}f_{xx}=0,& (t,x)\in\mathcal{O} \\
	f=W,&(t,x)\in\partial \mathcal{O}
	\end{cases}
	\label{eq:dir}
	\end{align}
	Since the operator $\mathcal{G}$ is linear, standard PDE theory suggests that there exists a unique smooth solution $f\in\mathcal{C}^{1,2}$ to \eqref{eq:dir} on $\mathcal{O}$. But this $f$ also solves \eqref{eq:hjbstop} on $\mathcal{O}$. By uniqueness of the viscosity solution, we deduce $W=f$ on $\mathcal{O}$ such that $W\in C^{1,2}(\mathcal{O})$. Finally, $\mathcal{C}$ is an open set and thus by the arbitrariness of $\mathcal{O}$ the smoothness property of $W$ can be extended to the entire $\mathcal{C}$.
	
	Thanks to the $C^{1,2}(\mathcal{O})$ property, \eqref{eq:hjbstop} can be interpreted in the classical sense such that on $\mathcal{C}$ we have $\frac{\nu^2}{2}W_{xx}=-W_t\geq 0$ as $W$ is decreasing in $t$.
\end{proof}

\begin{proof}[Proof of Proposition \ref{prop:stopset}]
	We first prove a preliminary result that $[0,T)\times (-\infty,0)\subseteq \mathcal{C}$, i.e. it is always suboptimal to stop on the negative regime before the terminal time. Suppose on contrary there exists $(t',x')$ with $0\leq t'< T$ and $x'<0$ such that $(t',x')\in \mathcal{S}$. Then $W(t',x')=U(x')$. Now consider an alternative stopping rule
	\begin{align*}
	\tau_{\epsilon}:=\inf\{s\geq t': X_s\notin (x'-\epsilon,x'+\epsilon)\}\wedge T
	\end{align*}
	for some $0<\epsilon<-x'$. Let $p_{\epsilon}:=\mathbb{P}(\tau_{\epsilon}<T)$. Then
	\begin{align*}
	W(t',x')\geq \mathbb{E}^{(t',x')}[U(X_{\tau_{\epsilon}})]&=\mathbb{P}(\tau_{\epsilon}<T,X_{\tau_{\epsilon}}=x'-\epsilon)U(x'-\epsilon)+\mathbb{P}(\tau_{\epsilon}<T,X_{\tau_{\epsilon}}=x'+\epsilon)U(x'+\epsilon)\\
	&\qquad+\mathbb{P}(\tau_{\epsilon}=T)\mathbb{E}[U(X_T)|\tau_{\epsilon}=T] \\
	&=\frac{1}{2}U(x'-\epsilon)+\frac{1}{2}U(x'+\epsilon)+(1-p_{\epsilon})\left\{\mathbb{E}[U(X_T)|\tau_{\epsilon}=T]-\frac{1}{2}U(x'-\epsilon)-\frac{1}{2}U(x'+\epsilon)\right\}\\
	&> U(x') +(1-p_{\epsilon})\left\{U(x'-\epsilon)-\frac{1}{2}U(x'-\epsilon)-\frac{1}{2}U(x'+\epsilon)\right\}.
	\end{align*}
	using the strict convexity of $U$ on $x<0$ and the martingale property of the process $X$. It is not hard to observe that $p_{\epsilon}\uparrow 1$ as $\epsilon\downarrow 0$. We immediately obtain the required contradiction $W(t',x')>U(x')=W(t',x')$.
	
	The rest of the proof goes as follows:
	
	\begin{enumerate}[label=(\roman*)]
		\item \textbf{Existence and non-negativity of $b$}: 
		
		We first show that $W(t,x)-U(x)$ is decreasing in $x$ over $x\geq 0$ and $t\in[0,T)$. Fix an arbitrary $\beta>0$ and define $F(t,x):=W(t,x+\beta)$. By the linear structure of the underlying Brownian motion, it can be easily seen that $F(t,x)=\sup_{\tau\in\mathcal{T}_{t,T}}\mathbb{E}[U(X_{\tau}+\beta)]$ and hence $F$ is the (unique) viscosity solution to
		\begin{align}
		\begin{cases}
		\min\left\{\mathcal{G}F,F-U(x+\beta)\right\}=0,& t<T;\\
		F(t,x)=U(x+\beta),& t=T,
		\end{cases}
		\label{eq:F}
		\end{align}
		where $\mathcal{G}f:=-f_t-\frac{\nu^2}{2}f_{xx}=0$. Let $G(t,x):=W(t,x)+U(x+\beta)-U(x)$. Then whenever $\mathcal{G}W=0$, we have $$\mathcal{G}G=\mathcal{G}W+\mathcal{G}U(x+\beta)-\mathcal{G}U(x)=\frac{\nu^2}{2}(U''(x)-U''(x+\beta))\geq0$$
		as $U$ is concave, and whenever $W(t,x)=U(x)$ we have $G(t,x)=U(x+\beta)$. Moreover, $G(T,x)=U(x+\beta)$. Hence $G$ is a supersolution to \eqref{eq:F}. By maximum principle (in a viscosity sense), we deduce $G\geq F$ leading to
		\begin{align}
		W(t,x)-U(x)\geq W(x+\beta)-U(x+\beta),
		\label{eq:decrease}
		\end{align}
		i.e. $W(t,x)-U(x)$ is decreasing.
		
		Now we show that that for each fixed $t\in[0,T)$ there always exists $x_0$ such that $W(t,x_0)=U(x_0)$. Such $x_0$, if exists, must be strictly positive since it is suboptimal to stop in the negative regime. Then together with the fact that $W(t,x)-U(x)$ is decreasing in $x$ on $x\geq 0$, we conclude there exists a unique $b(t)\in(0,\infty)$ such that $W(t,x)=U(x)\iff x\geq b(t)$. This will be sufficient to justify the existence of a positive boundary function $b$ which characterises the stopping set \eqref{eq:stopset}.
		
		To complete the proof, suppose on contrary that $W(t,x)>U(x)$ for all $x$. Then $\{t\}\times\mathbb{R}\in\mathcal{C}$ on which $W$ is a $C^{2}$ increasing convex function in $x$. $W(t,x)-U(x)$ being decreasing in $x$ now implies $W_x(t,x)\leq U'(x)$ for all $x$. In turn $\lim_{s\to\infty }W_{x}(t,s)\leq \lim_{s\to\infty}U'(s)=0$ and hence $W(t,x)$ must be a constant independent of $x$. But with $W(t,x)>U(x)$ this must imply $W(t,x)=\sup_s U(s)$ for all $(t,x)\in[0,T)\times \mathbb{R}$. This can easily shown to be false based on the same ideas used in the proof for Proposition \ref{prop:delta_suff}.

		\item \textbf{Monotonicity of $b$}: Consider $(t_1,x)\in \mathcal{S}$ such that $W(t_1,x)=U(x)$. Then for any $t_2>t_1$, we have $0\leq W(t_2,x)-U(x)\leq W(t_1,x)-U(x)=0$ since $W(t,x)$ is decreasing in $t$. Hence $W(t_2,x)=U(x)$ and $(t_2,x)\in \mathcal{S}$ as well such that $b(t)$ must be decreasing.
		
		\item \textbf{Continuity of $b$}: We begin by showing that $b$ is right-continuous. Fix $t<T$ and consider a decreasing sequence $(t_n)_{n\geq 1}$ with $t_n\downarrow t$. Then for each $n$ we have $(t_n,b(t_n))\in \mathcal{S}$. Since the set $\mathcal{S}$ is closed, we have $(t,b(t+))\in \mathcal{S}$ as well such that $b(t+)\geq b(t)$. But $b(t+)\leq b(t)$ as $b$ is decreasing. We hence conclude $b(t)=b(t+)$.
		
		Now we show that $b(t)$ is left-continuous. Suppose on contrary that there exists $t<T$ such that $b(t-)>b(t)$. Define $\xi:=\frac{b(t-)+b(t)}{2}$ such that $0\leq b(t)<\xi<b(t-)$. Choose $s\in(0,t)$ and then $0\leq b(t)<b(t-)\leq b(s)$.
		By definition of $b(\cdot)$ and the smooth-pasting property, we have $W(s,b(s))=U(b(s))$ and $W_{x}(s,b(s))=U'(b(s))$.\footnote{Smooth-pasting must hold at $b(s)$ because $b(s)>0$ and $U'(x)$ exists for any $x>0$. See \cite{peskir-shiryaev06}.} Then
		\begin{align*}
		W(s,\xi)-U(\xi)=\int_{b(s)}^{\xi}(W_x(s,y)-U'(y))dy&=\int_{b(s)}^{\xi}\int_{b(s)}^{y}(W_{xx}(s,z)-U''(z))dzdy \\
		&=\int_{\xi}^{b(s)}\int_{y}^{b(s)}(W_{xx}(s,z)-U''(z))dzdy\\
		&\geq \int_{\xi}^{b(s)}\int_{y}^{b(s)} C dz dy=\frac{C}{2}(\xi-b(s))^2
		\end{align*}
		for some constant $C>0$ independent of $s$ where we have used Lemma \ref{lem:propW} that $W_{xx}\geq 0$ on $\mathcal{C}$ and $U$ is a strictly concave $C^2$ function on the positive domain. Since $W(s,x)$ is continuous in $s$, if we let $s\uparrow t$ we deduce $$W(t,\xi)-U(\xi)\geq \frac{C}{2}(\xi-b(t-))^2>0.$$ But $b(t)<\xi$ and hence $(t,\xi)\in\mathcal{S}$ which implies $W(t,\xi)=U(\xi)$. We arrive at the required contradiction.
		
		\item \textbf{Limiting behaviour of $b$}: Suppose $b(T):=\lim_{t\uparrow T} b(t)>0$. Then let $\xi:=\frac{b(T)}{2}>0$ and with the same argument as in part (iii) of the proof we can deduce $W(s,\xi)-U(\xi)>\frac{C}{2}(\xi-b(s))^2$ for some constant $C>0$ and $s<T$. Contradiction can be obtained again by letting $s\uparrow T$ on recalling the terminal condition that $W(T,x)=U(x)$ for all $x$.
	\end{enumerate}
\end{proof}

\section{Ineffectiveness of dynamic Expected Shortfall constraint on a derivative trader}
\label{appendix:martingale}

We will assume that $U(x)=0$ for all negative $x$, so we are
specializing to the case of an investor with limited liability.
We will consider a Black--Scholes market with $\mu > r$.

Consider a derivatives trader who has a given budget and who wishes
to purchase a sequence of European options all with maturity $T$
to maximize their expected utility. We suppose that this trader
is subject only to a self-financing constraint and that at all times $t\in[0,T)$
they must meet an expected shortfall constraint at confidence level $\alpha$
with time interval $T-t$. We will show that such a trader can achieve an expected utility greater than or equal to $u$ for any $u<\sup U$.

Let $f:(0,\infty)\to \R$ be the payoff function of a European derivative with maturity $T$. We
will write ${\mathcal P}(f,S,t)$, ${\mathcal U}(f,S,t)$ and $\ES_\alpha(f,S,t)$ for the price, expected $U$ utility and expected shortfall of this derivative at time $t$ given that $S_t=S$ (the time horizon for the expected shortfall calculation being the remaining time to maturity, $T-t$).

Given constants $k,h,\ell \in \R$ with $\ell < 0 < h$ and
we define a
digital payoff function $f^{k,h,\ell}$ by
\[
f^{k,h,\ell}(S_T)=\begin{cases}
h & S_T \geq e^k \\
\ell & \text{otherwise}. 
\end{cases}
\]
We will write ${\mathcal D}$ for the set of all such digital options.

The proof of Theorem 4.1 in \cite{armstrong-brigo19}
shows that we may find $f^{k,h,\ell}\in {\mathcal D}$ satisfying
\begin{equation}
\begin{split}
{\mathcal P}(f^{k,h,\ell},S,t) &\leq -1 \\
\ES_\alpha(f^{k,h,\ell},S,t) &\leq -1 \\
{\mathcal U}(f^{k,h,\ell},S,t)&\geq u
\label{eq:formalResult1}
\end{split}
\end{equation}
for any $u<\sup U$. Moreover, we may take $k$ to be arbitrarily small and
one may also require $|\ell|>h$. We prove an extension of this result in the lemma below.

\begin{lemma}
Given any constants $\lambda_1,\lambda_2$ with $\lambda_1>0$ and $\lambda_2\in (0,\mu-r)$, there exists  $f^{k,h,\ell}\in {\mathcal D}$ such that
\begin{equation}
\begin{split}
{\mathcal P}(f^{k,h,\ell},Se^{\lambda_1 (T-t)},t) &\geq 1 \\
{\mathcal P}(f^{k,h,\ell},S,t) &= -1 \\
\ES_\alpha(f^{k,h,\ell},Se^{-\lambda_2 (T-t)},t) &\leq -1\\
{\mathcal U}(f^{k,h,\ell},S,t)&\geq u
\label{eq:newResult1}
\end{split}
\end{equation}
for any $u<\sup U$. The $k$ can be arbitrarily small.
\label{lemma:digital}
\end{lemma}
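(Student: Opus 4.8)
The plan is to build the required digital explicitly, playing the three free parameters $k,h,\ell$ against the four requirements, and to obtain all of \eqref{eq:newResult1} simultaneously in a single limit $k\to-\infty$. Recall that for $f^{k,h,\ell}$ the Black--Scholes price at spot $S'$ is $e^{-r(T-t)}\bigl[h\,p(S')+\ell\,(1-p(S'))\bigr]$, where $p(S')=\mathbb{Q}(S_T\ge e^{k}\mid S_t=S')$. I will write $p:=p(S)$, $p_+:=p(Se^{\lambda_1(T-t)})$, $p_-:=p(Se^{-\lambda_2(T-t)})$ for the risk-neutral in-the-money probabilities at the current, up- and down-shifted spots, and $q:=\mathbb{P}(S_T\ge e^{k}\mid S_t=S)$, $q_-:=\mathbb{P}(S_T\ge e^{k}\mid S_t=Se^{-\lambda_2(T-t)})$ for the corresponding physical probabilities. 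First I would fix $h$ once and for all to be large enough that $U(h)>u$ (possible since $u<\sup U$ and $U$ is increasing) and that $h\,e^{-r(T-t)}>1$. The current-price equality then \emph{pins down} $\ell$: solving $e^{-r(T-t)}[hp+\ell(1-p)]=-1$ gives $\ell=-\frac{e^{r(T-t)}+hp}{1-p}$, which is negative (so $f^{k,h,\ell}\in{\mathcal D}$) and satisfies $\ell\to-\infty$ as $k\to-\infty$, consistent with the limited-liability digital structure.

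With $h$ fixed and $\ell$ slaved to $k$ in this way, I would dispatch two of the remaining conditions by elementary Gaussian tail estimates. The utility at the current spot is ${\mathcal U}(f,S,t)=U(h)\,q$, the loss branch contributing nothing under limited liability, and since $q\to1$ as $k\to-\infty$ this exceeds $u$ for $k$ small enough. For the up-spot price, writing it as $e^{-r(T-t)}[hp_+ + \ell(1-p_+)]$ and substituting for $\ell$, the loss contribution equals $-(e^{r(T-t)}+hp)\frac{1-p_+}{1-p}$; the asymptotic $\Phi(-x)\sim\phi(x)/x$ shows $\frac{1-p_+}{1-p}\to0$ precisely because the up-shift $\lambda_1>0$ makes the numerator tail strictly thinner, so the up-spot price converges to $h\,e^{-r(T-t)}>1$.

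The crux is the down-spot expected-shortfall bound, and this is where the hypothesis $\lambda_2<\mu-r$ enters. For $k$ small we have $1-q_-<\alpha$, so the $\alpha$-tail of the loss consumes the whole bad branch (probability $1-q_-$) together with a slice of the good branch, giving $\ES_\alpha(f,Se^{-\lambda_2(T-t)},t)=\tfrac{1}{\alpha}\bigl[(1-q_-)(-\ell)+(\alpha-(1-q_-))(-h)\bigr]$ up to a deterministic current-value offset depending on the $\ES$ convention. The only dangerous term is $(1-q_-)(-\ell)=(1-q_-)\frac{e^{r(T-t)}+hp}{1-p}\sim(e^{r(T-t)}+h)\frac{1-q_-}{1-p}$, an indeterminate $0\cdot\infty$ form. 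I would control it by comparing the two Gaussians: the physical terminal log-price started from the down-spot has the same variance as, but mean exceeding by $(\mu-r-\lambda_2)(T-t)$, the risk-neutral terminal log-price started from the current spot, and this gap is strictly positive exactly when $\lambda_2<\mu-r$. The tail-ratio estimate then forces $\frac{1-q_-}{1-p}\to0$ (at a rate proportional to $\mu-r-\lambda_2$) as $k\to-\infty$, so the bad-branch contribution vanishes and $\ES_\alpha$ is driven by $-h<-1$, comfortably below the limit.

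Finally I would observe that each of the four conditions holds for all sufficiently small $k$, so choosing $k$ below the finite collection of thresholds produces a single $f^{k,h,\ell}$ meeting \eqref{eq:newResult1}, with $k$ arbitrarily small by construction. The main obstacle, and the step deserving the most care, is the down-spot $\ES$ estimate: both the rate at which $|\ell|\to\infty$ and the rate at which the physical bad-branch probability $1-q_-$ decays are governed by Gaussian tails, and the bound succeeds only because the drift gap $\mu-r-\lambda_2>0$ makes the physical tail from the depressed spot decay faster than the risk-neutral pricing tail that inflates $|\ell|$.
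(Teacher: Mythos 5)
Your proposal is correct and follows essentially the same route as the paper: fix $h$, use the budget constraint to solve for $\ell$ as a function of $k$, and send $k\to-\infty$, with Gaussian tail-ratio asymptotics giving ${\mathcal P}\to he^{-r(T-t)}$ at the up-shifted spot, $\ES_\alpha\to -h$ at the down-shifted spot (precisely because the drift gap $\mu-r-\lambda_2>0$ makes the physical tail from the depressed spot thinner than the risk-neutral pricing tail), and ${\mathcal U}\to U(h)$. The paper's proof is the same computation written out with explicit $\Phi$-ratios; your identification of the $0\cdot\infty$ term in the down-spot $\ES$ as the crux, and of where $\lambda_2<\mu-r$ enters, matches the paper exactly.
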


\begin{proof}
Calculating the price, expected shortfall and expected utility explicitly one finds
that
\begin{equation}
\begin{split}
{\mathcal P}(f^{k,h,\ell},S,t)&=e^{-r(T-t)}\left( \ell \Phi\left( \frac{k-\ln S - \nu(T-t)}{\sigma \sqrt{T-t}} \right)
+ h \left(1 - \Phi\left( \frac{k - \ln S - \nu(T-t)}{\sigma\sqrt{T-t}} \right) \right)
\right) \\
\ES_\alpha(f^{k,h,\ell},S,t)&=-\frac{1}{\alpha}\left( \ell \Phi\left( \frac{k -\ln S - \tilde{\nu}(T-t)}{\sigma\sqrt{T-t}}
\right) + h \left(\alpha - \Phi\left( \frac{k - \ln S - \tilde{\nu}(T-t)}{\sigma\sqrt{T-t}}\right) \right)
\right) \\
{\mathcal U}(f^{k,h,\ell},S,t)&=U(h)\left(1 - \Phi\left( \frac{k - \ln S - \tilde{\nu}(T-t)}{\sigma\sqrt{T-t}}\right) \right)
\end{split}
\label{eq:formalResult2}
\end{equation}
for $k<\ln S+\tilde{\nu}(T-t)+\sigma\sqrt{T-t}\Phi^{-1}(\alpha)$, where $\nu :=  r-\frac{1}{2}\sigma^2$ and $\tilde{\nu} := \mu-\frac{1}{2}\sigma^2$. Fix $h>0$. The budget constraint ${\mathcal P}(f^{k,h,\ell},S,t) = -1$ allows us to express $\ell$ in terms of $k$ and $h$ via
\begin{align*}
\ell=h-\frac{h+e^{r(T-t)}}{\Phi\left( \frac{k-\ln S - \nu(T-t) }{\sigma\sqrt{T-t}} \right)}.
\end{align*}
Then ${\mathcal P}(f^{k,h,\ell},Se^{\lambda_1 (T-t)},t)$ and $\ES_\alpha(f^{k,h,\ell},Se^{-\lambda_2 (T-t)},t)$ can be written as
\begin{align*}
{\mathcal P}(f^{k,h,\ell},Se^{\lambda_1 (T-t)},t)&=e^{-r(T-t)}\left(h-(h+e^{r(T-t)})\frac{\Phi\left( \frac{k-\ln S-\lambda_1(T-t) - \nu(T-t)}{\sigma \sqrt{T-t}} \right)}{\Phi\left( \frac{k-\ln S - \nu(T-t)}{\sigma \sqrt{T-t}} \right)}\right),\\ \ES_\alpha(f^{k,h,\ell},Se^{-\lambda_2 (T-t)},t)&=-h+\frac{h+e^{r(T-t)}}{\alpha}\frac{\Phi\left( \frac{k-\ln S+\lambda_2(T-t) - \tilde{\nu}(T-t)}{\sigma\sqrt{T-t}} \right)}{\Phi\left( \frac{k-\ln S - \nu(T-t)}{\sigma\sqrt{T-t}} \right)}.
\end{align*}
For as long as $\lambda_1>0$ and $\lambda_2\in (0,\mu-r)$, we have
\begin{align*}
\lim_{k\to -\infty}{\mathcal P}(f^{k,h,\ell},Se^{\lambda_1 (T-t)},t)=e^{-r(T-t)}h,\qquad \lim_{k\to -\infty}\ES_\alpha(f^{k,h,\ell},Se^{-\lambda_2 (T-t)},t)=-h,\qquad \lim_{k\to-\infty} {\mathcal U}(f^{k,h,\ell},S,t)=U(h).
\end{align*}
The result immediately follows since $h$ is arbitrary.

\end{proof}

Let us write
\[
d_{\nu}(S,k,\xi):=\sigma^{-1}(\xi^{-1}( k - \ln S) -\nu \xi).
\]
This will have a positive partial derivative with respect to $\xi$ whenever
\[
k < \ln S - \nu \xi^2.
\]
Hence if choose $k$ such that
\[
k < \ln S - |\nu| T=\ln S - |\mu - \tfrac{1}{2} \sigma^2| T
\]
then $d_{\nu}(S,k,\xi)$ will be an increasing function of $\xi$ in the range
$\xi \in (0,\sqrt{T}].$ Since
\[
ES_{\alpha}(f^{k,h,\ell},S,t)=-\frac{1}{\alpha}\left[
\ell \Phi( d_{\nu}(S,k,\sqrt{T-t}) ) + h(\alpha-\Phi(d_{\nu}(S,k,\sqrt{T-t})))\right]
\]
we see that for sufficiently small $k$, when $|\ell|>|h|$, $ES_{\alpha}(f^{k,j,\ell},S,t)$ will be decreasing in $t$
for $t\in[0,T)$ (consider $S$ and $f^{k,h,\ell}$ as fixed). Similarly, we can deduce ${\mathcal P}(f^{k,h,\ell},S,t)$ is increasing in $t$ for $t\in[0,T)$ for sufficiently small $k$. It now follows from Lemma \ref{lemma:digital} that given $S_t=S \in \R_{\geq 0}$ and $t \in [0,T)$ we may find $f^{k,h,\ell} \in {\mathcal D}$ satisfying 
\begin{equation}
\begin{split}
{\mathcal P}(f^{k,h,\ell},Se^{\lambda_1 (T-t)},s) &\geq 1 \\
{\mathcal P}(f^{k,h,\ell},S,t) &\leq -1 \\
\ES_\alpha(f^{k,h,\ell},S e^{-\lambda_2 (T-t)},s) &\leq -1 \\
{\mathcal U}(f^{k,h,\ell},S,t)&\geq  u.
\end{split}
\label{eq:formalResult4}
\end{equation}
for all $s \in [t,T)$, $\lambda_1>0$, $\lambda_2\in(0,\mu-r)$ and $u<\sup U$.

Suppose that at a given time $t$ the stock price is $S_t$.
By purchasing an arbitrarily large quantity of the option with payoff $f^{k,h,\ell}$ satisfying \eqref{eq:formalResult4} we can meet any cost or expected shortfall constraint at time $t$. By choosing $k$ sufficiently small we may
ensure that the probability this option has a negative payoff is as small as we like. Thus we may find an option that meets
our current budget, meets a given expected shortfall and ensures
that the expected utility for the trader is greater than or equal to $u<\sup U$.
Furthermore, when the stock price rises to $S_t e^{\lambda_1(T-t)}$ the option position can be liquidated for an arbitrarily high positive value. On the other hand, this option will continue to meet the expected shortfall constraint
until the stock price falls to $S_t e^{-\lambda_2(T-t)}$. When this occurs, the trader can opt to rebalance the option position. Let $\pi$ be the probability that a rebalancing occurs, which is the probability that the stock price level visits $S_t e^{-\lambda_2(T-t)}$ before $S_t e^{\lambda_1(T-t)}$ in the time interval $[t,T]$. By the scaling properties of geometric Brownian motion, one can show that if $\mu\geq \sigma^2/2$ then $\pi$ is bounded above by the constant $\lambda_1/(\lambda_1+\lambda_2)$.\footnote{For the case of $\mu<\sigma^2/2$, an upper bound can be derived by using the fact that the probability of a drifting Brownian motion $X_t:=\eta t + \sigma B_t$ hitting level $-a$ before $b$ over an infinite horizon is $\frac{1-\exp(-\frac{2\eta}{\sigma^2}b)}{\exp(\frac{2\eta}{\sigma^2}a)-\exp(-\frac{2\eta}{\sigma^2}b)}$.}

We define a sequence of stopping times $(t_i)_{i \in \N}$ inductively as follows. We define $t_0=0$ and construct $f_0 \in {\mathcal D}$ such that equation \eqref{eq:formalResult4} holds. Let $t_1$ be the smaller of $T$ and the first time $t^\prime>t_0$ satisfying $S_{t^\prime} = S_{t_0} e^{-\lambda_2 (T-t_0)}$ or $S_{t^\prime} = S_{t_0} e^{\lambda_1 (T-t_0)}$. If $S_{t_1}= S_{t_0} e^{\lambda_1 (T-t_0)}$, the option is liquidated at a positive value. Then the trader deposits the proceed in the riskfree account and no further action is taken until the maturity time $t=T$ (equivalent to purchasing a claim with constant payoff equal to the future value of the trader's current wealth). Else if $S_{t_1}= S_{t_0} e^{-\lambda_2 (T-t_0)}$, the trader roll into a new position $f_1\in {\mathcal D}$ satisfying equation \eqref{eq:formalResult4} using the current market value of $f_0$ at $t=t_1$ as the new initial budget.

Once $t_i$ has been defined and the position is not yet liquidated, we choose $f_i \in {\mathcal D}$ such that equation \eqref{eq:formalResult4} holds again
for $S=S_{t_i}$ and $t=t_i$. We define $t_{i+1}$ to be the smaller of $T$ and the first time $t^\prime>t_{i}$ satisfying $S_{t^\prime} = S_{t_i} e^{-\lambda_2 (T-t_i)}$ or $S_{t^\prime} = S_{t_i} e^{\lambda_1 (T-t_i)}$. At each time $t_i<T$ where the option is not liquidated, the trader
may rebalance by purchasing $\alpha>0$ units of a derivative with payoff $f_i$ to guarantee
that $u_i:={\mathcal U}(\alpha f_i,S_t,t)\geq u$, to meet the budget constraint
and to ensure that expected shortfall constraint will hold until 
the stopping time $t_{i+1}$. If the option is liquidated at $t=t_i$, the resulting proceed is arbitrarily large by construction of $f_{i-1}$ and hence depositing this amount of wealth in the riskfree account can deliver any (riskfree) utility level $u$ at maturity.

The probability that the trader needs to rebalance the portfolio $n$ or more times is $\pi^n$. So with probability $1$, the trader will only need to rebalance finitely often. The expected utility conditioned on the stock price at time $t$ will be an increasing function of $S_t$. Hence the expected utility conditioned on rebalancing the portfolio exactly $n$ times will be greater than or equal to $u_n$, since rebalancing only occurs if the stock price drops. Hence the overall expected utility of the strategy will be greater than or equal to $u$. Our assumption of limited liability ensures that our trader is unconcerned by events of probability $0$, so it
is acceptable to assign an expected utility to this strategy even though
it is a martingale strategy.

This shows that a derivatives trader subject only to an expected
shortfall constraint with the expected shortfall always calculated at time $T$
is able to achieve any expected utility less than $\sup U$. We now consider
instead a trader who invests on a time interval $[0,T]$ subject to
to expected shortfall constraints with a fixed time interval $\Delta\leq T$ used
to calculate the expected shortfall. This trader may choose any acceptable strategy until $T-\Delta$. Over the time interval $[T-\Delta,T)$ they may pursue an essentially identical strategy to the one given above except
using derivatives whose payoff at time $t+\Delta$ is given by a
function $f^{k,h,\ell}(S_T)$ for some $f^{k,h,\ell} \in {\mathcal D}$.
Hence such a trader can also achieve a utility greater than or equal to $u$.

\section{Extension to multiple risky assets}
\label{app:multi}

We discuss how several key results in this paper will change when there are multiple risky asset. Suppose there are $n$ risky assets in the Black-Scholes economy and the price process vector $S_t:=(S_t^{(1)},S_t^{(1)},\dots,S_t^{(n)})'$ has the dynamics
\begin{align*}
dS_t=Diag(S_t) \mu dt + \sigma dB_t
\end{align*}
where $Diag(S_t):=Diag(S_t^{(1)},S_t^{(2)},\dots,S_t^{(n)})$ is a $n\times n$ diagonal matrix, $\mu$ is a $n\times 1$ constant vector, $\sigma$ is a $n\times m$ constant matrix and $B$ is a $m$-dimensional Brownian motion. The portfolio strategy $\Pi$ is now valued in $\mathbb{R}^n$ and the portfolio value process $X$ has dynamics of
\begin{align}
dX_t=[rX_t+\Pi_t'(\mu-r e)]dt + \Pi_t' \sigma dB_t
\label{eq:multiassetwealth}
\end{align}
with $e$ being a vector of unity. Following a similar derivation as in Section \ref{sect:setup}, if the risk manager assumes the risky assets holding $\Pi$ is held fixed over the risk evaluation horizon then the projected portfolio loss on $[t,t+\Delta]$ is a normal random variable $L_t$ with
\begin{align*}
\mathbb{E}[L_t]=-\frac{e^{r\Delta}-1}{r}\Pi_t'(\mu-re),\quad \text{Var}(L_t)=\frac{e^{2r\Delta}-1}{2r}\Pi_t'\Sigma \Pi_t
\end{align*}
where $\Sigma:=\sigma\sigma'$ is the variance-covariance matrix of the risky assets return. 

It is now straightforward to write down the new admissible sets under the VaR and ES constraint as
\begin{align}
K_{\text{VaR}}:=\left\{\pi\in\mathbb{R}^n: -\frac{e^{r\Delta}-1}{r}\pi'(\mu-re)-\Phi^{-1}(\alpha) \sqrt{\frac{e^{2r\Delta}-1}{2r}\pi'\Sigma \pi}\leq R\right\}
\label{eq:setKVaR2}
\end{align} 
and
\begin{align}
K_{\text{ES}}:=\left\{\pi\in\mathbb{R}^n: -\frac{e^{r\Delta}-1}{r}\pi'(\mu-re)+\frac{\phi(\Phi^{-1}(\alpha))}{\alpha} \sqrt{\frac{e^{2r\Delta}-1}{2r}\pi'\Sigma \pi}\leq R\right\}
\label{eq:setKES2}
\end{align} 
respectively. The portfolio optimisation problem is to solve
\begin{align}
V(t,x):=\sup_{\Pi\in\mathcal{A}(K)}\mathbb{E}^{(t,x)}[U(X_T)]
\label{eq:valfun2}
\end{align}
where $\mathcal{A}(K):=\{\Pi:\int_0^T |\Pi_t|^2 dt<\infty \text{ } \mathbb{P}-a.s,\quad \Pi(t,\omega)\in K \quad \mathcal{L}\otimes \mathbb{P}\text{-a.e. } (t,\omega)\}$ subject to the dynamics \eqref{eq:multiassetwealth}. A dynamic VaR and ES constraint can be incorporated by the choice of $K=K_{\text{VaR}}$ and $K_{\text{ES}}$.

Now we show that Proposition \ref{prop:delta_suff} can be extended to the multi-asset setup.
\begin{prop}
	For the optimisation problem \eqref{eq:valfun2}, if the set $K$ is bounded then for every $t<T$ there exists $x$ such that $V(t,x)<\sup_{s}U(s)$
	\label{prop:multi_suff}
\end{prop}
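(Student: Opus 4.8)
The plan is to follow the same skeleton as the proof of Proposition \ref{prop:delta_suff}, but to replace the explicit one-dimensional HJB computation by a uniform second-moment estimate that transfers directly to $\mathbb{R}^n$. As in the scalar case, since $U(x)\leq 0$ for $x\leq 0$ and $U$ is concave on $x>0$, for any $C>0$ I can dominate $U(x)\leq \bar{U}(x):=mx^{+}+C$ for a suitable $m>0$, so that $V(t,x)\leq \bar{V}(t,x):=\sup_{\Pi\in\mathcal{A}(\bar{K})}\mathbb{E}^{(t,x)}[\bar{U}(X_T^{\Pi})]$, where $\bar{K}$ is a closed ball $\{|\pi|\leq b\}$ containing the bounded set $K$. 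Writing $\bar{V}(t,x)=C+m\sup_{\Pi}\mathbb{E}^{(t,x)}[(X_T^{\Pi})^{+}]$, it suffices to show that this supremum tends to $0$ as $x\downarrow-\infty$.

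To this end I would solve the wealth equation \eqref{eq:multiassetwealth} explicitly as
\begin{align*}
X_T^{\Pi}=xe^{r(T-t)}+\int_t^T e^{r(T-s)}\Pi_s'(\mu-re)\,ds+\int_t^T e^{r(T-s)}\Pi_s'\sigma\,dB_s=:xe^{r(T-t)}+G_T^{\Pi}.
\end{align*}
The crucial observation is that, because $\bar{K}$ is bounded, the gain term $G_T^{\Pi}$ has a second moment bounded uniformly over all admissible $\Pi$: the drift part is deterministically controlled by $(\sup_{\pi\in\bar{K}}|\pi'(\mu-re)|)\int_t^T e^{r(T-s)}\,ds$, while the martingale part has variance at most $(\sup_{\pi\in\bar{K}}\pi'\Sigma\pi)\int_t^T e^{2r(T-s)}\,ds$ by the It\^o isometry. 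Both time integrals are finite for every value of $r$, so $\mathbb{E}[(G_T^{\Pi})^2]\leq K_0$ for a constant $K_0$ independent of $\Pi$.

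Setting $a:=-xe^{r(T-t)}>0$ for $x<0$, I would then use $(X_T^{\Pi})^{+}=(G_T^{\Pi}-a)^{+}\leq |G_T^{\Pi}|\,\mathbbm{1}_{\{|G_T^{\Pi}|>a\}}$ together with Cauchy--Schwarz and Chebyshev to obtain
\begin{align*}
\mathbb{E}[(X_T^{\Pi})^{+}]\leq \sqrt{\mathbb{E}[(G_T^{\Pi})^2]}\,\sqrt{\mathbb{P}(|G_T^{\Pi}|>a)}\leq \sqrt{K_0}\cdot\frac{\sqrt{K_0}}{a}=\frac{K_0}{a},
\end{align*}
uniformly in $\Pi$. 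Since $e^{r(T-t)}>0$ regardless of the sign of $r$, we have $a\to\infty$ as $x\downarrow-\infty$, so $\bar{V}(t,x)\to C$. Choosing any $C\in(0,\sup_s U(s))$ (a nonempty interval because $U$ is increasing with $U(0)=0$) then produces an $x$ for which $V(t,x)\leq\bar{V}(t,x)<\sup_s U(s)$, as required.

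The main obstacle I anticipate is conceptual rather than computational: in dimension $n$ the candidate optimal control is no longer a single boundary point of $\bar{K}$, because the maximiser of the (convex in $\pi$) Hamiltonian depends on the unknown derivatives $\bar{V}_x,\bar{V}_{xx}$ and hence on $(t,x)$. Consequently the wealth process under the optimiser is not a fixed Gaussian and the clean verification argument behind \eqref{eq:upbound2} does not carry over. The moment-bound route sidesteps identifying the optimiser entirely and relies only on the boundedness of $\bar{K}$, which is exactly the available hypothesis; the sole technical point to check is that $\pi\mapsto\pi'(\mu-re)$ and $\pi\mapsto\pi'\Sigma\pi$ are bounded on $\bar{K}$, which holds by continuity on the compact set $\bar{K}$.
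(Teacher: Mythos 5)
Your proof is correct, and in its key step it takes a genuinely different route from the paper. Both arguments begin identically: dominate $U$ by $\bar{U}(x)=mx^{+}+C$, enlarge $K$ to a bounded set $\bar{K}$, and reduce to showing $\bar{V}(t,x)\to C$ as $x\downarrow-\infty$. The paper then proceeds by \emph{identifying} the optimiser of the relaxed problem: it bounds the drift uniformly, asserts that the optimal control is a constant maximising the portfolio variance over $\bar{K}$ (so that the optimally controlled process is a drifting Brownian motion), and reads off the limit from the explicit Gaussian formula \eqref{eq:upbound2}. You instead bypass the optimiser entirely with a uniform second-moment bound on the gain term $G_T^{\Pi}$ over all $\Pi\in\mathcal{A}(\bar{K})$, followed by Cauchy--Schwarz and Chebyshev to get $\sup_{\Pi}\mathbb{E}[(X_T^{\Pi})^{+}]\leq K_0/a\to 0$. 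Your estimates check out: the inequality $(G-a)^{+}\leq|G|\mathbbm{1}_{\{|G|>a\}}$ is valid for $a>0$, the It\^o isometry gives the uniform variance bound from the boundedness of $\pi\mapsto\pi'\Sigma\pi$ on $\bar{K}$, and the bound is uniform in the control, which is all that is needed. What your approach buys is robustness: the paper's verification step in $n$ dimensions is only heuristic (the claimed constant optimal control depends on convexity-in-$\pi$ of the Hamiltonian and is not fully justified there), whereas your argument needs no information about the optimiser and works verbatim for any bounded $\bar{K}$ and any sign of $r$. What it gives up is the explicit form of $\bar{V}$, which the paper reuses elsewhere (e.g.\ in the linear-growth bound of Proposition \ref{prop:vis}). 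The only shared implicit assumption is that $\sup_s U(s)>0$, which both you and the paper take for granted when choosing $C\in(0,\sup_s U(s))$.
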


\begin{proof}
	Based on the same ideas in the proof of Proposition \ref{prop:delta_suff}, for any S-shaped $U$ there exists some $m>0$ and $C>0$ such that $U(x)\leq C+m x^{+}=:\bar{U}(x)$. For as long as $K$ is bounded, there exists $b>0$ such that
	$$K\subseteq \left\{(\pi_1,\pi_2,\dots,\pi_n)'\in\mathbb{R}^n: -b\leq \pi_i\leq b \quad\forall i=1,2,\dots,n\right\}=:\bar{K}.$$ Let $\bar{\mu}:=b\sum_{i=1}^{n}|\mu_i-r|$. Given $X_t=x$, for any $\Pi\in\mathcal{A}(\bar{K})$ we have
	\begin{align*}
	X^{\Pi}_T&=xe^{r(T-t)}+\int_t^T e^{r(T-s)}\Pi_t' (\mu-re) ds + \int_t^T e^{r(T-s)} \Pi_s' \sigma dB_s \\
	&\leq xe^{r(T-t)}+\bar{\mu}\int_t^T e^{r(T-s)} ds + \int_t^T e^{r(T-s)} \Pi_s' \sigma dB_s=:\bar{X}^{\Pi}_T
	\end{align*}
	almost surely. Hence
	\begin{align}
	V(t,x)&\leq \sup_{\Pi_t\in \mathcal{A}(\bar{K})} \mathbb{E}^{(t,x)}[\bar{U}(X^{\Pi}_{T})]\nonumber \\
	&\leq \sup_{\Pi\in \mathcal{A}(\bar{K})} \mathbb{E}^{(t,x)}[\bar{U}(\bar{X}^{\Pi}_{T})]=:\bar{V}(t,x)
	\label{eq:relaxcontrol}
	\end{align}
	where $\bar{X}$ has dynamics of
	\begin{align*}
	d\bar{X}_t=(r\bar{X}_t +\bar{\mu} )dt+ \Pi_t' \sigma dB_t,\qquad \bar{X}_t=x.
	\end{align*}
	Since $\bar{U}$ is convex, based on the same augment in the proof of Proposition \ref{prop:delta_suff} we expect the optimal control for problem \eqref{eq:relaxcontrol} is a constant process given by
	\begin{align*}
	\Pi^*_t=\sup_{\pi:\pi_{i}\in\{-K,K\}}\pi'\Sigma\pi
	\end{align*}
	for all $t$. i.e. one should make the volatility of the process to be as large as possible. Moreover, the optimally controlled process $\bar{X}^{\Pi^*}$ is simply a drifting Brownian motion. Hence we can show that the corresponding value function $\bar{V}$ has the same form as in \eqref{eq:upbound2} and in turn the same result follows immediately.
\end{proof}

In the single-asset case, whether the set $K_i$ is bounded is solely determined by the Sharpe ratio of the asset relative to the risk management parameter $M_i$ as shown in Lemma \ref{lem:setA}. We illustrate that a similar sufficient condition holds in the case with two risky assets.

\begin{lemma}
	Consider a two-asset economy where the mean return vector is $\mu=(\mu_1,\mu_2)'$ and the variance-covariance matrix is $\Sigma=(\sigma_1^2,\sigma_1\sigma_2\rho;\sigma_1\sigma_2\rho,\sigma_2^2)$. Recall the definitions of $M_i$ for $i\in\{\text{VaR},\text{ES}\}$ as per \eqref{eq:Mconst}.
	If
	\begin{align}
	\sqrt{\frac{\left(\frac{\mu_1-r}{\sigma_1}\right)^2+\left(\frac{\mu_2-r}{\sigma_2}\right)^2-2\left(\frac{\mu_1-r}{\sigma_1}\right)\left(\frac{\mu_2-r}{\sigma_2}\right)\rho}{1-\rho^2}}<M_i,
	\label{eq:multisharpe}
	\end{align}
	then the set $K_i$ defined by \eqref{eq:setKVaR2} or \eqref{eq:setKES2} is bounded.
	\label{lem:twoasset}
\end{lemma}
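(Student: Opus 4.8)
The plan is to treat the two risk measures uniformly. Both defining inequalities of $K_{\text{VaR}}$ and $K_{\text{ES}}$ can be written as $g_i(\pi)\leq R$, where
\[
g_i(\pi):=b_i\sqrt{\pi'\Sigma\pi}-a\,\pi'(\mu-re),\qquad a:=\frac{e^{r\Delta}-1}{r},
\]
with $b_{\text{VaR}}:=-\Phi^{-1}(\alpha)\sqrt{\tfrac{e^{2r\Delta}-1}{2r}}$ and $b_{\text{ES}}:=\tfrac{\phi(\Phi^{-1}(\alpha))}{\alpha}\sqrt{\tfrac{e^{2r\Delta}-1}{2r}}$. Since $\alpha<1/2$ forces $\Phi^{-1}(\alpha)<0$, both $a$ and $b_i$ are strictly positive, and comparing with \eqref{eq:Mconst} gives the key identity $b_i/a=M_i$. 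The point is that $g_i$ is continuous and positively homogeneous of degree one, so the geometry of $K_i$ is governed entirely by the sign of $g_i$ on the unit sphere.

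First I would establish the boundedness criterion: if $g_i(\pi)>0$ for every $\pi\neq 0$, then $\delta:=\min_{\|\pi\|=1}g_i(\pi)>0$ (attained by continuity on the compact sphere), and homogeneity gives $g_i(\pi)=\|\pi\|\,g_i(\pi/\|\pi\|)\geq \delta\|\pi\|$; hence $g_i(\pi)\leq R$ forces $\|\pi\|\leq R/\delta$, so $K_i$ is bounded. It therefore suffices to show that \eqref{eq:multisharpe} implies strict positivity of $g_i$ off the origin.

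The core step is a Rayleigh-quotient bound. Since $|\rho|<1$, the matrix $\Sigma$ is positive definite, and the generalized Cauchy--Schwarz inequality in the $\Sigma$-inner product gives $\pi'(\mu-re)\leq\sqrt{(\mu-re)'\Sigma^{-1}(\mu-re)}\,\sqrt{\pi'\Sigma\pi}$. Substituting and using $b_i=aM_i$ yields
\[
g_i(\pi)\geq a\left(M_i-\sqrt{(\mu-re)'\Sigma^{-1}(\mu-re)}\right)\sqrt{\pi'\Sigma\pi},
\]
which is strictly positive for all $\pi\neq 0$ whenever $M_i>\sqrt{(\mu-re)'\Sigma^{-1}(\mu-re)}$. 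Finally I would carry out the routine two-asset computation: inverting $\Sigma$ with $\det\Sigma=\sigma_1^2\sigma_2^2(1-\rho^2)$ and expanding the quadratic form in $v=(\mu_1-r,\mu_2-r)'$ gives
\[
(\mu-re)'\Sigma^{-1}(\mu-re)=\frac{\left(\frac{\mu_1-r}{\sigma_1}\right)^2+\left(\frac{\mu_2-r}{\sigma_2}\right)^2-2\rho\left(\frac{\mu_1-r}{\sigma_1}\right)\left(\frac{\mu_2-r}{\sigma_2}\right)}{1-\rho^2},
\]
so the hypothesis \eqref{eq:multisharpe} is exactly $M_i>\sqrt{(\mu-re)'\Sigma^{-1}(\mu-re)}$ and the conclusion follows. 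The case $r=0$ is recovered via the limits $a\to\Delta$ and $\sqrt{(e^{2r\Delta}-1)/2r}\to\sqrt{\Delta}$, as in Lemma \ref{lem:setA}.

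I do not anticipate a serious obstacle, since the argument is essentially scale-invariance plus a dual-norm identity (indeed the same chain, read as an equivalence, shows the condition is also necessary, though only sufficiency is claimed). The only points needing care are the sign bookkeeping that secures $b_i>0$, so that the Cauchy--Schwarz bound points in the useful direction and the supremum of the Sharpe-type ratio is the relevant obstruction, and the positive-definiteness of $\Sigma$, on which both the inversion and the strict inequality for $\pi\neq 0$ depend.
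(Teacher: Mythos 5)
Your proof is correct, but it takes a genuinely different route from the paper. The paper isolates the square-root term, squares the inequality (a one-way implication, so $K_{\mathrm{ES}}$ is contained in the resulting quadratic sublevel set $\tilde K$), and then checks that the conic $f(\pi_1,\pi_2)=R^2$ is an ellipse precisely when $B^2-4AC<0$, which it unpacks coefficient by coefficient into \eqref{eq:multisharpe}. You instead exploit the positive homogeneity of $g_i(\pi)=b_i\sqrt{\pi'\Sigma\pi}-a\,\pi'(\mu-re)$ together with the generalized Cauchy--Schwarz inequality $\pi'(\mu-re)\le\sqrt{(\mu-re)'\Sigma^{-1}(\mu-re)}\,\sqrt{\pi'\Sigma\pi}$, reducing everything to the single scalar comparison $M_i>\sqrt{(\mu-re)'\Sigma^{-1}(\mu-re)}$ and then verifying that the two-asset expansion of $(\mu-re)'\Sigma^{-1}(\mu-re)$ reproduces the left side of \eqref{eq:multisharpe}. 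Your approach buys three things: it works verbatim for $n$ assets (the natural generalisation of the hypothesis is simply that the maximal attainable Sharpe ratio $\sqrt{(\mu-re)'\Sigma^{-1}(\mu-re)}$ is below $M_i$, which also gives the condition a clean economic reading as a tangency-portfolio Sharpe ratio), it sidesteps the squaring step and the attendant care about the sign of $R+a\,\pi'(\mu-re)$, and, as you note, the same chain run through the equality case of Cauchy--Schwarz (along the direction $\pi\propto\Sigma^{-1}(\mu-re)$, where $g_i$ is nonpositive on a whole ray) yields necessity, which the paper's sufficiency-only statement does not address. The paper's conic-section computation is more elementary but is intrinsically two-dimensional. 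Your sign bookkeeping ($b_i>0$ from $\Phi^{-1}(\alpha)<0$, $b_i/a=M_i$) and the positive-definiteness of $\Sigma$ for $|\rho|<1$ are exactly the right points to flag; no gap.
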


\begin{proof}
	We prove the result for the case of ES constraint $i=\text{ES}$. The result under VaR constraint can be obtained similarly.
	
	If $(\pi_1,\pi_2)'\in K_{\text{ES}}$ then
	\begin{align*}
	&-\frac{e^{r\Delta}-1}{r}[(\mu_1-r)\pi_1+(\mu_2-r)\pi_2]+\frac{\phi(\Phi^{-1}(\alpha))}{\alpha} \sqrt{\frac{(e^{2r\Delta}-1)}{2r}(\sigma_1^2\pi_1^2+\sigma_2^2\pi_2^2+2\sigma_1\sigma_2\rho\pi_1\pi_2)}\leq R \\
	&\iff \frac{\phi(\Phi^{-1}(\alpha))}{\alpha} \sqrt{\frac{(e^{2r\Delta}-1)}{2r}(\sigma_1^2\pi_1^2+\sigma_2^2\pi_2^2+2\sigma_1\sigma_2\rho\pi_1\pi_2)}\leq R+\frac{e^{r\Delta}-1}{r}[(\mu_1-r)\pi_1+(\mu_2-r)\pi_2] \\
	&\implies \left[\frac{\phi(\Phi^{-1}(\alpha))}{\alpha} \sqrt{\frac{(e^{2r\Delta}-1)}{2r}(\sigma_1^2\pi_1^2+\sigma_2^2\pi_2^2+2\sigma_1\sigma_2\rho\pi_1\pi_2)}\right]^2\leq \left[R+\frac{e^{r\Delta}-1}{r}[(\mu_1-r)\pi_1+(\mu_2-r)\pi_2]\right]^2.
	\end{align*}
	On rearranging, the last inequality becomes
	\begin{align*}
	f(\pi_1,\pi_2):=A\pi_1^2 + B\pi_1\pi_2+C\pi_2^2+D\pi_1+E \pi_2 \leq R^2
	\end{align*}
	where
	\begin{align*}
	A&:=\left(\frac{\phi(\Phi^{-1}(\alpha))}{\alpha}\right)^2\left(\frac{e^{2r\Delta}-1}{2r}\right)^2\sigma_1^2-2\left(\frac{e^{r\Delta}-1}{r}\right)^2 (\mu_1-r)^2,\\
	B&:=2\left(\frac{\phi(\Phi^{-1}(\alpha))}{\alpha}\right)^2 \left(\frac{e^{2r\Delta}-1}{2r}\right)\sigma_1\sigma_2\rho-2\left(\frac{e^{r\Delta}-1}{r}\right)^2(\mu_1-r)(\mu_2-r),\\
	C&:=\left(\frac{\phi(\Phi^{-1}(\alpha))}{\alpha}\right)^2\left(\frac{e^{2r\Delta}-1}{2r}\right)^2\sigma_2^2-\left(\frac{e^{r\Delta}-1}{r}\right)^2 (\mu_2-r)^2,\\
	D&:=-2R\left(\frac{e^{r\Delta}-1}{r}\right)(\mu_1-r),\\
	E&:=-2R\left(\frac{e^{r\Delta}-1}{r}\right)(\mu_2-r).
	\end{align*}
	The set $\tilde{K}:=\{(\pi_1,\pi_2)'\in\mathbb{R}^2:f(\pi_1,\pi_2) \leq R^2\}$ is bounded if and only if the conic section $f(\pi_1,\pi_2)=R^2$ is an ellipse, or equivalently $B^2-4AC<0$. This condition can be explicitly written in terms of the model parameters as in \eqref{eq:multisharpe}. The result immediately follows on noticing that $K_{\text{ES}}$ is a subset of $\tilde{K}$.
\end{proof}

\begin{cor}
	A dynamic risk constraint $i\in\{\text{VaR},\text{ES}\}$ is effective in a two-asset economy if condition \eqref{eq:multisharpe} holds.
	\label{cor:multieffective}
\end{cor}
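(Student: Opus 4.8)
The plan is to obtain this corollary as an immediate consequence of the two preceding results, mirroring the ``if'' direction of Theorem \ref{thm:effective} in the single-asset case. Since the statement only claims sufficiency of condition \eqref{eq:multisharpe} (and not a converse), no analogue of the delicate ``only if'' replication argument is needed; the entire proof is a chain of implications: condition \eqref{eq:multisharpe} $\Rightarrow$ boundedness of $K_i$ $\Rightarrow$ strict reduction of the value function below $\sup_s U(s)$ at some state $\Rightarrow$ effectiveness.

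First I would invoke Lemma \ref{lem:twoasset}, which states precisely that if \eqref{eq:multisharpe} holds then the admissible set $K_i$ (defined by \eqref{eq:setKVaR2} or \eqref{eq:setKES2}) is bounded. Next, with $K = K_i$ bounded, Proposition \ref{prop:multi_suff} applies verbatim and yields: for every $t < T$ there exists $x$ such that $V_i(t,x) < \sup_s U(s)$. Then I would appeal to Definition \ref{def:effective}, which asks exactly for an $x$, at each $t<T$, with $V_i(t,x) < V_0(t,x) = \sup_s U(s)$. Matching these gives the conclusion.

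The only point deserving a line of justification, rather than the boundedness-to-reduction step (already supplied by Proposition \ref{prop:multi_suff}), is that the unconstrained benchmark $V_0(t,x)$ still equals $\sup_s U(s)$ under the multi-asset dynamics \eqref{eq:multiassetwealth}. This is the multi-asset analogue of Proposition \ref{prop:uncon}: the Black--Scholes market remains complete, so the same digital-payoff construction---a claim paying a small positive amount with high probability and an extreme loss on a vanishing-probability event, priced through the multidimensional pricing kernel---drives the attainable expected utility up to $\sup_s U(s)$, while the sublinear left-tail of $U$ (Assumption \ref{assump:standing}) renders the disutility contribution negligible. The main obstacle, such as it is, lies here: one must confirm this benchmark so that ``strictly below $\sup_s U(s)$'' genuinely coincides with ``strictly below $V_0$''. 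Everything else is a direct citation of Lemma \ref{lem:twoasset} and Proposition \ref{prop:multi_suff}.
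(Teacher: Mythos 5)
Your proof is correct and takes essentially the same route as the paper, which also derives the corollary immediately from Lemma \ref{lem:twoasset} (boundedness of $K_i$ under condition \eqref{eq:multisharpe}) and Proposition \ref{prop:multi_suff}. Your extra remark about verifying that the unconstrained benchmark $V_0(t,x)=\sup_s U(s)$ persists in the multi-asset complete market is a sensible point of care, but it does not change the argument.
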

\begin{proof}
This immediately follows from Lemma \ref{lem:twoasset} and Proposition \ref{prop:multi_suff}.
\end{proof}
Comparing the results in Lemma \ref{lem:twoasset} to the single-asset case in Lemma \ref{lem:setA}, we can see that a dynamic risk constraint will translate into a bound on the trading strategy provided that the same risk management parameter $M_i$ is sufficiently strict relative to the quality of the assets. But the criteria of assets quality will now take the correlation $\rho$ into account to reflect the benefits of diversification. 



\end{document}